\newlist{inlinelist}{enumerate*}{1}
\setlist*[inlinelist]{label=(\arabic*)}
\definecolor{deep-blue}{HTML}{2c7bb6}
\colorlet{mid-blue}{deep-blue!25}
\newtheorem{corollary}{Corollary}
\newtheorem{definition}{Definition}
\newtheorem{lemma}[corollary]{Lemma}
\newtheorem{proposition}[corollary]{Proposition}
\theoremstyle{definition}
\newtheorem{assumption}{Assumption}
\newtheorem{example}{Example}
\newtheorem*{remark}{Remark}
\newtheoremstyle{custom}{}{}{\itshape}{}{\bfseries}{.}{.5em}{\thmnote{#3}}
\theoremstyle{custom}
\setlist{listparindent=\parindent, parsep=0pt}
\newcommand{\EE}{\mathbb{E}}
\newcommand{\PP}{\mathbb{P}}
\newcommand{\RR}{\mathbb{R}}
\DeclareMathOperator{\sgn}{sign}
\DeclareMathOperator{\Binomial}{Binomial}
\DeclareMathOperator{\Uniform}{Uniform}
\newcommand{\htheta}{\hat{\theta}}
\newcommand{\FDPd}{\text{FDP}_\text{dir}}
\newcommand{\hFDPd}{\widehat{\text{FDP}}_\text{dir}}
\newcommand{\ao}{\alpha_0}
\newcommand{\lest}{\le_\text{st}}
\newcommand{\gest}{\ge_\text{st}}
\begin{document}

\begin{frontmatter}

\title{Statistical Methods for Replicability Assessment}
\runtitle{Statistical Methods for Replicability Assessment}

\begin{aug}
  \author{\fnms{Kenneth} \snm{Hung}\corref{}\ead[label=e1]{kenhung@berkeley.edu}}
  \and
  \author{\fnms{William} \snm{Fithian}\ead[label=e2]{wfithian@berkeley.edu}}

  \runauthor{K. Hung and W. Fithian}

  \affiliation{University of California, Berkeley}

  \address{Department of Mathematics\\ 951 Evans Hall, Suite 3840\\ Berkeley, CA 94720-3840\\
  \printead{e1}}

  \address{Department of Statistics\\ 301 Evans Hall\\ Berkeley, CA 94720\\
          \printead{e2}}

\end{aug}

\begin{abstract}
  Large-scale replication studies like the Reproducibility Project: Psychology (RP:P) provide invaluable systematic data on scientific replicability, but most analyses and interpretations of the data fail to agree on the definition of ``replicability'' and disentangle the inexorable consequences of known selection bias from competing explanations. We discuss three concrete definitions of replicability based on
  \begin{inlinelist}
    \item whether published findings about the signs of effects are mostly correct,
    \item how effective replication studies are in reproducing whatever true effect size was present in the original experiment, and
    \item whether true effect sizes tend to diminish in replication.
  \end{inlinelist}
  We apply techniques from multiple testing and post-selection inference to develop new methods that answer these questions while explicitly accounting for selection bias. Our analyses suggest that the RP:P dataset is largely consistent with publication bias due to selection of significant effects. The methods in this paper make no distributional assumptions about the true effect sizes.
\end{abstract}

\begin{keyword}[class=MSC]
\kwd[Primary ]{62F03}
\kwd[; secondary ]{62P25}
\end{keyword}

\begin{keyword}
\kwd{replicability}
\kwd{multiple testing}
\kwd{post-selection inference}
\kwd{publication bias}
\kwd{meta-analysis}
\end{keyword}

\end{frontmatter}

\section{Introduction}
\label{sec:intro}

  Growing concerns about selection bias, $p$-hacking, and other questionable research practices (QRPs) have raised urgent questions about the reliability of scientific findings. While concerns about replicability cut across scientific disciplines, psychologists have led large-scale efforts to assess the replicability of their own field. The largest and most systematic of these efforts has been the Reproducibility Project:\ Psychology (RP:P),\footnote{In some parts of the literature, ``reproducibility'' has taken on a computational connotation, meaning only that other scientists can repeat the analysis using the original study's data; we will lean toward the more unambiguous term ``replicability.''} a major collaboration by several hundred psychologists to replicate a representative sample of 100 studies published in 2008 in three top psychology journals, {\em Psychological Science}, {\em Journal of Personality and Social Psychology}, and {\em Journal of Experimental Psychology: Learning, Memory, and Cognition}.\footnote{The test statistics, effect sizes and most pertinent information are all publicly available on at the Open Science Foundation website at \url{https://osf.io/ezcuj/}.}

  While the RP:P dataset is an invaluable resource, scientists disagree on how to quantify or measure replicability \citep{Goodman:2016bo,Amrhein:2017en}. \citet[OSC;][]{OpenScienceCollaboration:2015cn} reported three main metrics: it found that $64\%$ ($= 1 - 36\%$) of the replication studies did not find statistically significant results in the same direction as the original studies, that $53\%$ ($= 1 - 47\%$) of $95\%$ confidence intervals for the replication studies do not contain the point estimates for their corresponding original studies, and that $83\%$ of the effect size estimates declined from original studies to replications. All three summary statistics were widely reported as indicating a dire crisis for the credibility of experimental psychology research. For example, the {\em Washington Post} reported that RP:P ``affirms that the skepticism [of published results] was warranted'' \citep{Achenbach:2015vi}; the {\em Economist} noted that OSC ``managed to replicate satisfactorily the results of only $39\%$ of the studies investigated'' \citeyearpar{Anonymous:uoQKTVTm}; and the {\em New York Times} reported that ``more than half of the findings did not hold up when retested'' \citep{Carey:2015wp}.

  This negative gloss was challenged in a comment by \citet{Gilbert:2016he}, who criticized both the fidelity of some of the replications' experimental designs and the aptness of the metrics reported by \citet{OpenScienceCollaboration:2015cn}. In particular, \citeauthor{Gilbert:2016he}\ pointed out that, because there is sampling error in the replication point estimates, we should not expect $95\%$ of the estimates to fall into the replication confidence intervals even under ideal conditions. Moreover, any small or large variations in the true effect sizes between the original and replication studies could further deflate the expected fraction of ``successful replications,'' as measured in this way. \citeauthor{Gilbert:2016he}\ concluded that ``OSC seriously underestimated the reproducibility of psychological science,'' sparking further debate between defenders of OSC's conclusions \citep{Anderson:2016gs,Srivastava:2016,Nosek:2016} and the critics \citep{Gilbert:2016uv,Gilbert:2016th}.\footnote{While much of the ensuing discussion focused on the question of whether the confidence interval metric $53\%$ is too pessimistic, analogous criticisms apply to the ``significant replications'' metric of $64\%$ as well: the replication studies could be underpowered even when a true effect is present.}

\subsection{Three definitions of replicability}

  To determine whether OSC truly underestimated replicability, we must first pin down the rather slippery question of what ``replicability'' actually is. Although the three metrics used by OSC are simply descriptive statistics that do not purport to estimate any explicitly defined underlying quantity, we can loosely characterize the $64\%$, $53\%$ and $83\%$ numbers respectively as qualitative answers to three questions:
  \begin{description}
    \item[False directional claims.] {\em What fraction of the original studies were erroneous} in claiming that the true effect was nonzero, in the claimed direction (positive or negative)? \citet{Gelman:2000tg} called such mistakes {\em type S} errors.
    \item[Effect shift.] {\em How much do the effect sizes shift} from the original study to the replication study? We call the discrepancy between the original and replication effect {\em effect shift}.
    \item[Effect decline.] {\em What fraction of the effect sizes decline?} More precisely, what fraction of the true effect sizes shift in a direction opposite to the original claims when the studies were replicated, and by how much?
  \end{description}
  The first question concerns a type of {\em false discovery rate} (FDR) of the statistical hypotheses, viewing the field of social psychology as a collective enterprise in large-scale multiple testing: it quantifies the fraction of findings that would be confirmed if the exact same studies could be carried out again with much larger samples from the same populations. The second question concerns a basic form of repeatability: whether scientists are typically successful in closely replicating each others' experimental conditions, so that the true effect being measured is stable across different experiments. The third question builds upon the second question: whether true effect sizes tend systematically to attenuate in replications. An overall trend of declining true effects could suggest various interpretations, including systematic biases in the original experiments or failures by the replication teams to reproduce key experimental conditions that produced the original effects.

  As we will see, however, none of the three reported metrics can be taken at face value as {\em estimates} of the answers to the corresponding questions, due to the confounding factor of pervasive selection bias. By using techniques from multiple testing and post-selection inference, we will develop methods to rigorously address these questions without assuming a model for the prior distribution of effect sizes. For the RP:P data we estimate the rate of false directional claims at roughly $32\%$ among studies with $p < 0.05$, which would be considered unacceptably high in most multiple testing applications. By contrast, among studies with $p < 0.005$, a lower threshold proposed by \citet{Benjamin:2018gh}, our estimate drops to $7\%$, with an upper confidence bound of $18\%$. We also compute confidence intervals for the effect shift in each individual study pair and find that, after adjusting for multiplicity, about $11\%$ of the intervals exclude zero, an idealized null hypothesis of perfect replication. For effect decline, we find in aggregate that $35\%$ of the true effects declined, and $22\%$ declined by at least $25\%$.

  In addressing each question, we define our estimands in terms of the true effects present in the statistical populations actually sampled in each study. Because some studies may be biased or lack external validity --- for example, because of flaws in the study design, or because survey participants are unrepresentative of the broader population of scientific interest --- these effect sizes may not reflect the latent scientific quantities the experiments purport to measure. Uncovering such discrepancies is beyond the reach of data analysis alone, but we should keep them in mind as we interpret the results.

\subsection{The role of selection bias}

  The RP:P data shows unmistakable signs of selection for statistically significant findings in the original experiments: $91$ of the $100$ results replicated by OSC were statistically significant at the $0.05$ level in the original study and four of the others had ``marginally significant'' $p$-values between $0.05$ and $0.06$. This is due partly to publication bias (that the studies might not have been published, or the results discussed, if the $p$-values had not been significant), but also partly to OSC's method for choosing which results to replicate. Each OSC replication team selected a ``key result'' from the last experiment presented in the original paper, and evidently most teams chose a significant finding as the key result (justifiably so, since positive results usually draw the most attention from journal readers and the outside world). \Cref{fig:pval-dist} shows the empirical distribution of $p$-values from the original and replication studies.
  \begin{figure}[htbp]
    \centering
    \includegraphics[width=0.8\textwidth]{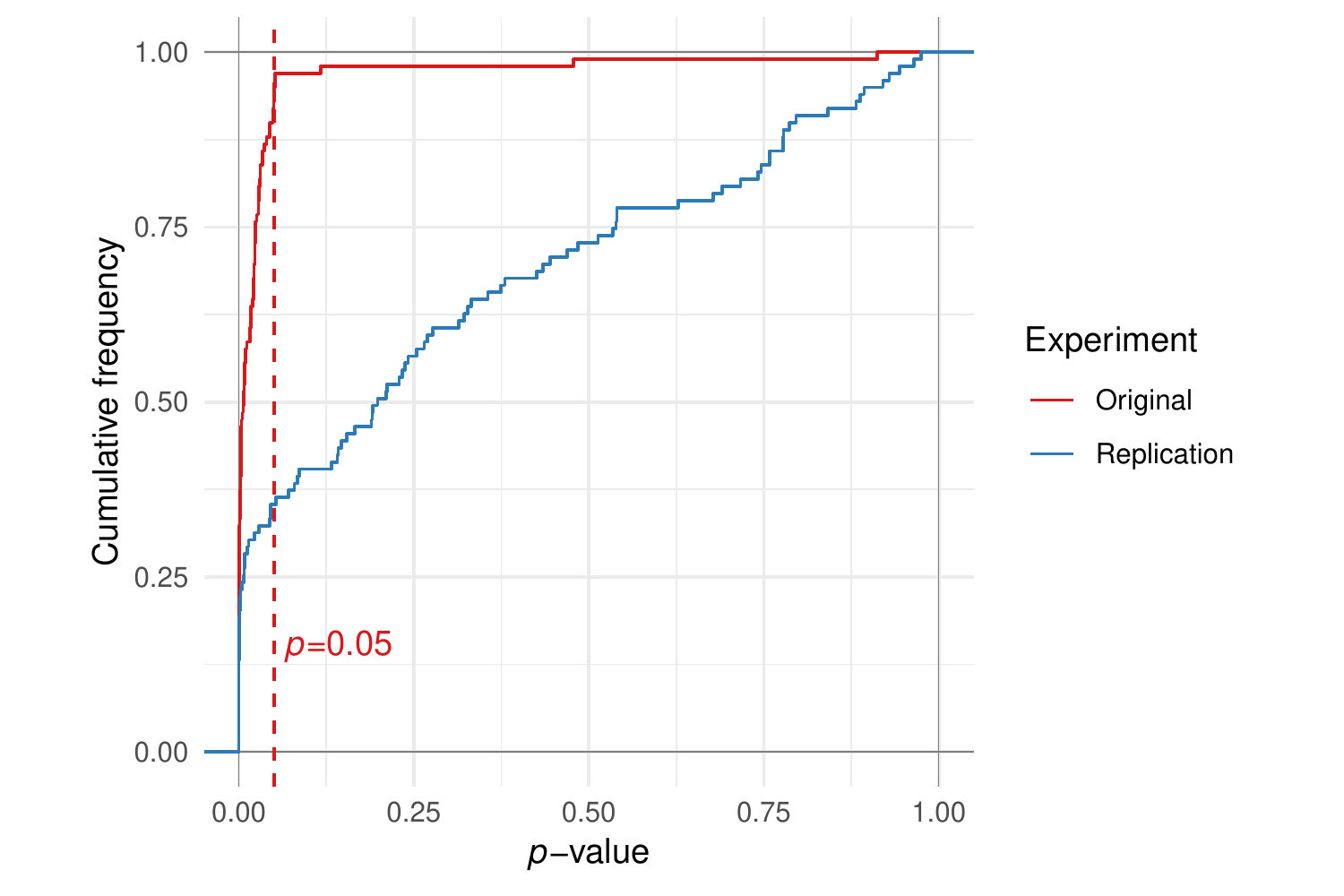}
    \caption{The empirical distribution of the original and replication $p$-values. Nearly all of the original $p$-values (in red) are smaller than $0.05$.}
  \label{fig:pval-dist}
  \end{figure}

  The resulting selection bias in the original studies leads to many well-known and predictable pathologies, such as systematically inflated effect size estimates, undercoverage of (unadjusted) confidence intervals, and misleading answers from unadjusted meta-analyses. Indeed, most of the phenomena reported by OSC, including the three metrics discussed above, could easily be produced by selection bias alone. This would be true {\em even if there are few false directional claims, all replications are exact, and true effects do not decline}, as illustrated in the following simulation study.
  
  \begin{example}
    \label{eg:sel-bias}
    Consider a stylized setting where all experiments (both original and replication) have an identical effect size $\theta$, producing an unbiased Gaussian estimate with standard error 1. Assume, however, that we observe only study pairs for which the original study is significant at level $0.05$.

    \Cref{fig:naive-same-dir-func} shows the expected fraction of replication studies which are not statistically significant in the same direction as the corresponding original studies, as a function of effect size $\theta$, along with the true proportion of false directional claims; or type S errors. Even when the true error rate is low, e.g.\ at $\theta = 1$ as shown in \Cref{fig:naive-same-dir-theta1}, the proportion of replications reporting the same directional findings as the original studies can remain low.
    \begin{figure}[htbp]
	    \centering
	    \begin{subfigure}[t]{0.59\hsize}
	      \centering
	      \includegraphics[width=\hsize]{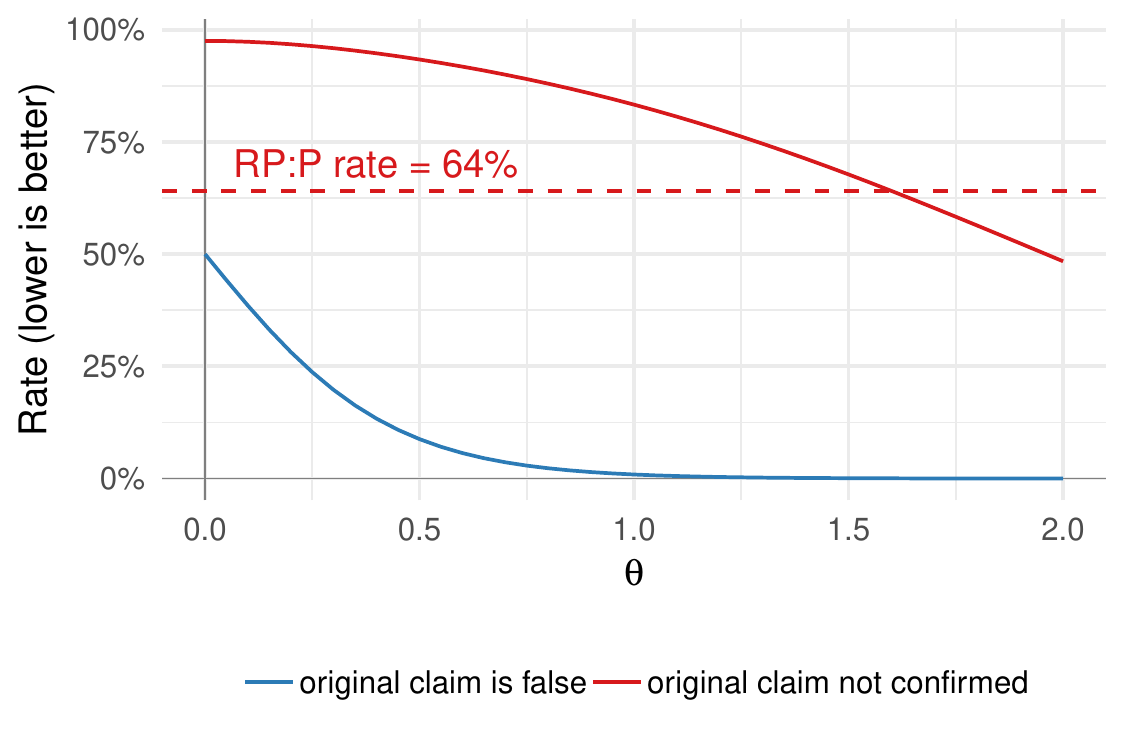}
	      \caption{The expected fraction of replications that do not confirm (at level $0.05$) the original directional claim (red), and the proportion of false directional claims in the original studies (blue), as a function of effect size $\theta$. For small $\theta$, the fraction of replications that do not confirm the claims in the original studies may dramatically overestimate the fraction of false original claims.}
	    \label{fig:naive-same-dir-func}
	    \end{subfigure}
	    \hfill
	    \begin{subfigure}[t]{0.39\hsize}
	      \centering
	      \includegraphics[width=\hsize]{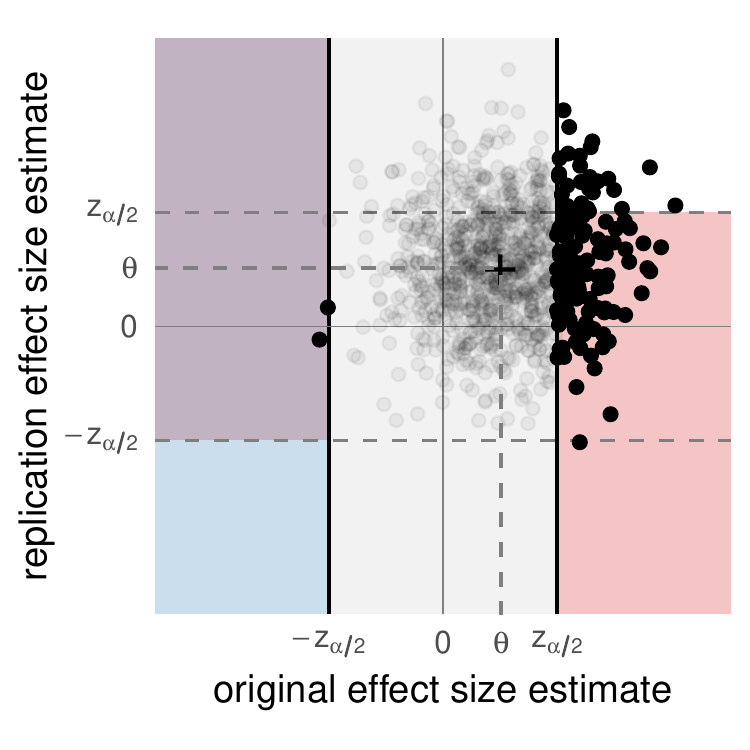}
	      \caption{$\theta=1$. The gray region is unobserved. For points in the red region, the replication does not confirm the original directional claim, and for points in the blue region, the original claim is directionally false. The red and blue regions overlap in the purple region.}
	    \label{fig:naive-same-dir-theta1}
	    \end{subfigure}
	    \caption{}
	  \label{fig:naive-same-dir}
	  \end{figure}

    Likewise, we simulate the expected fraction of $95\%$ replication confidence intervals that fail to cover their original point estimates in \Cref{fig:naive-ci-sim} and the expected fraction of effect sizes that decline in \Cref{fig:naive-decline}. In both cases, we see that selection bias is more than sufficient to produce the metrics in RP:P, even in our idealized simulation with exact replications and relatively few type S errors.
    \begin{figure}[htbp]
	    \centering
	    \begin{subfigure}[t]{0.59\hsize}
	      \centering
	      \includegraphics[width=\hsize]{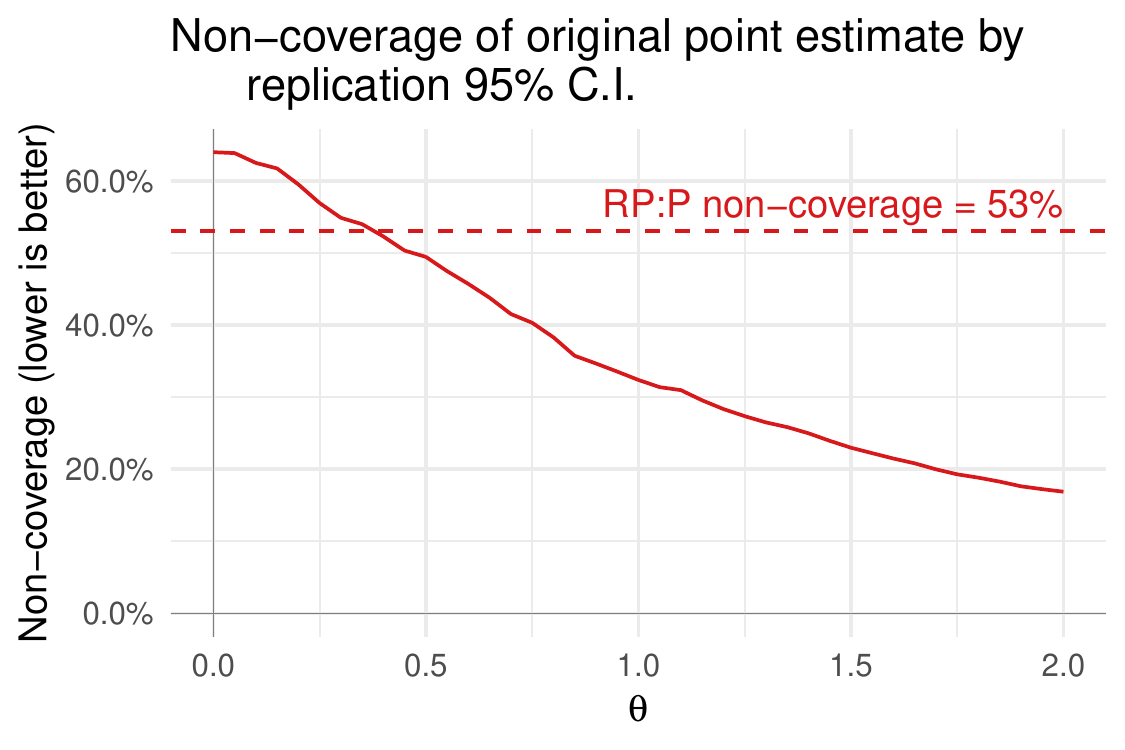}
	      \caption{Expected fraction of original point estimates falling outside the replication confidence interval, as a function of effect size $\theta$. For small $\theta$, the fraction of original point estimates falling outside the replication $95\%$ confidence intervals can easily exceed the RP:P reported metric of $53\%$, even when all replications are perfectly exact.}
	    \label{fig:naive-ci-sim-func}
	    \end{subfigure}
	    \hfill
	    \begin{subfigure}[t]{0.39\hsize}
	      \centering
	      \includegraphics[width=\hsize]{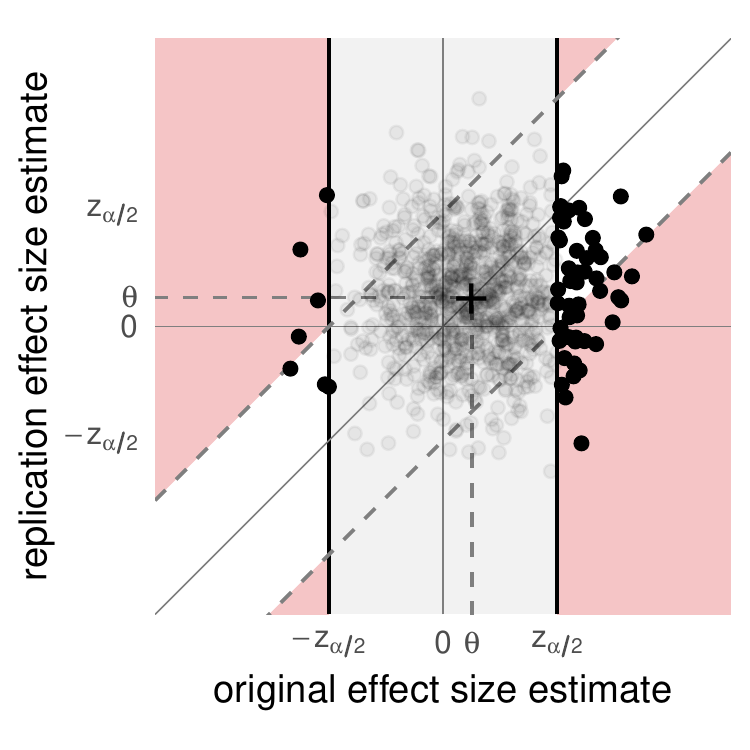}
	      \caption{$\theta = 0.5$. The gray region is unobserved. For points in the red region, the original point estimate differs from the replication estimate by more than $z_{\alpha/2}$ and hence the original point estimate falls outside in the replication $95\%$ confidence interval.}
	    \label{fig:naive-ci-theta05}
	    \end{subfigure}
	    \caption{}
	  \label{fig:naive-ci-sim}
	  \end{figure}
	  \begin{figure}[htbp]
	    \centering
	    \begin{subfigure}[t]{0.59\hsize}
	      \centering
	      \includegraphics[width=\hsize]{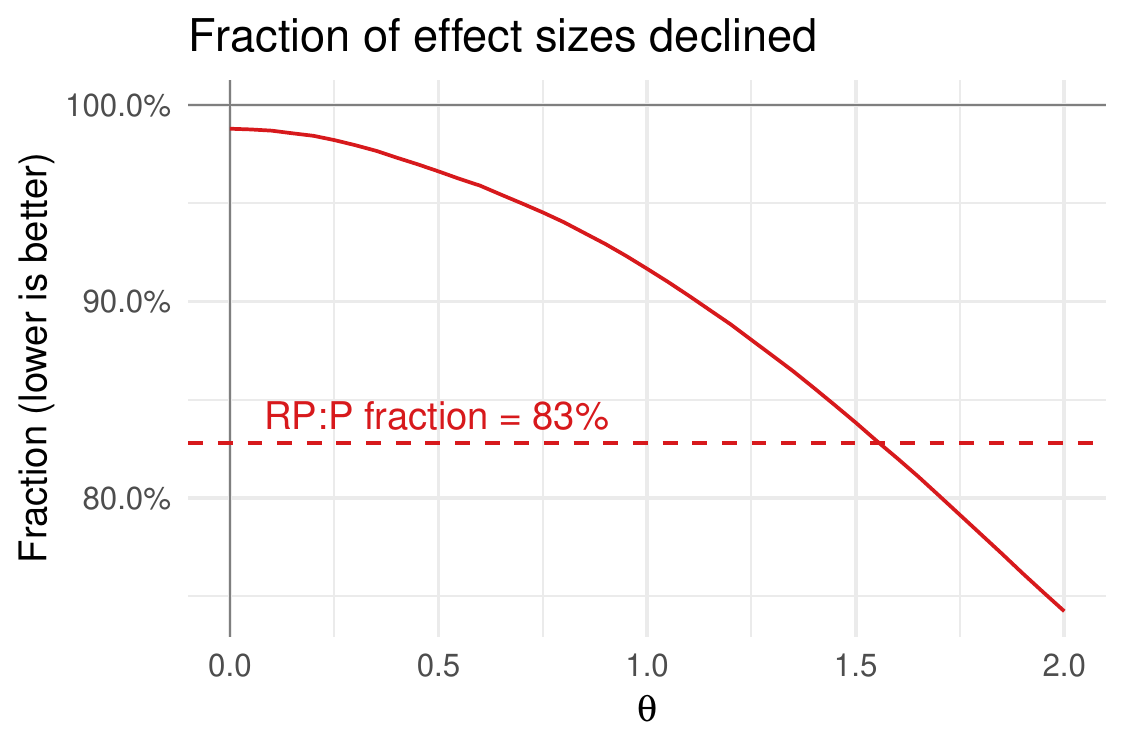}
	      \caption{Expected fraction of effect size point estimates that declined toward zero in replication, as a function of effect size of $\theta$. For small $\theta$, the fraction of effect size estimates declining from original to replication studies can easily exceed the RP:P reported metric of $83\%$, even when there is no decline in the true effect sizes.}
	    \label{fig:naive-decline-func}
	    \end{subfigure}
	    \hfill
	    \begin{subfigure}[t]{0.39\hsize}
	      \centering
	      \includegraphics[width=\hsize]{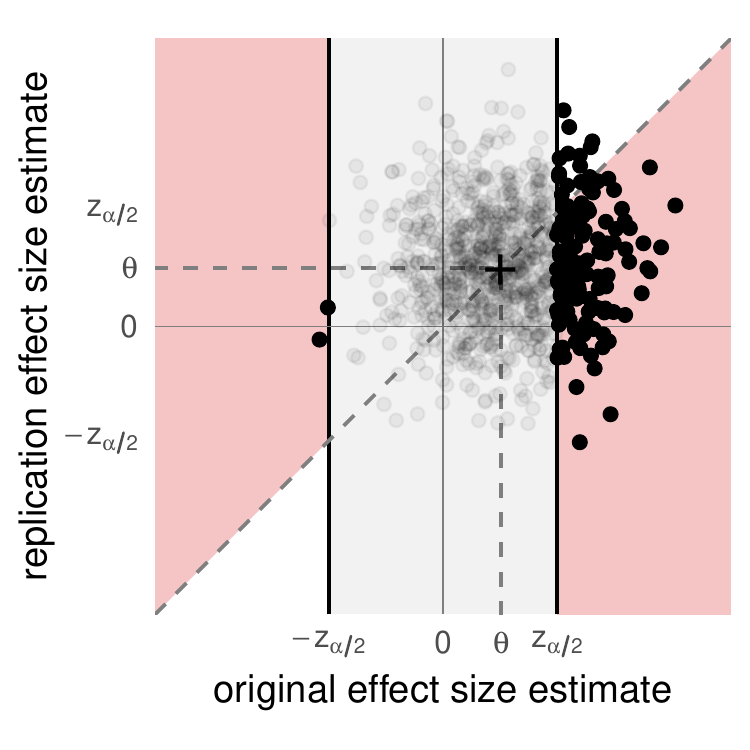}
	      \caption{$\theta = 0.5$. The gray region is unobserved. Points in the red region represent declining point estimates in replications. When the original point estimate is positive, a decline is marked by a smaller replication estimate; on the other hand, if the original estimate is negative, a decline is indicated by a larger replication estimate.}
	    \label{fig:naive-ci-theta03}
	    \end{subfigure}
	    \caption{}
	  \label{fig:naive-decline}
	  \end{figure}
  \end{example}

  Because selection bias could, in principle, provide a sufficient explanation for the metrics reported in RP:P, those metrics do not, in and of themselves, provide any evidence of any other problems. In particular, they shed no light on whether the FDR is actually high, or how much the effect sizes shifted, or whether effect sizes tend to decline. Nor do they provide evidence for any competing accounts of the replication crisis, such as QRPs like $p$-hacking, high between-study variability in effect sizes, or systematic biases in the original studies. To discern anything about other explanations, we must adjust for the pervasive effects of selection bias.

  Another good reason to disentangle selection bias from other sources of error is that the former is, in some sense, the most innocuous explanation for the phenomena observed by OSC while the others present much deeper scientific issues. The technical issues of selection bias can be addressed either retrospectively by statistical adjustments \citep[e.g.][]{Duval:2000dg,Hedges:1992eb,Simonsohn:2014ch,Fithian:2014ws,Andrews:2018vh}, or prospectively with more preregistration or larger sample sizes. By contrast, it would be deeply worrying if psychologists were systematically unable to repeat their colleagues' experiments, or if most published claims about effect sizes were directionally incorrect.
  
\subsection{Formalizing replicability}

  We now introduce a simple formal model for replication studies with selection bias. For study $i=1, \ldots, m$, let $\theta_{i,O}$ and $\theta_{i,R}$ denote the true effect sizes in the original and the replication studies, respectively. Abstracting away experimental design details, assume that each study pair produces two normally distributed effect size estimators $\htheta_{i,O}$ and $\htheta_{i,R}$. Assume additionally that for the study pair to appear in our replication data, $\htheta_{i,O}$ must be statistically significant at level $\alpha=0.05$;\footnote{We relax this assumption in \Cref{sec:methodology}.} then for some significance threshold $c>0$ we have
  \begin{equation}
    \htheta_{i,O} \sim N\left(\theta_{i,O}, \sigma_{i,O}^2\right) 1_{\{|\htheta_{i,O}| > c\}} \qquad\text{ and }\qquad \htheta_{i,R} \sim N\left(\theta_{i,R}, \sigma_{i,R}^2\right),
  \label{eq:trunc-model}
  \end{equation}
  with all estimates assumed to be independent of each other. The indicator $1_{\{|\htheta_{i,O}| > c\}}$ beside the normal distribution in \eqref{eq:trunc-model} means that the distribution of $\htheta_{i,O}$ has been truncated to the event where $|\htheta_{i,O}|>c$ and renormalized so that it integrates to $1$. For the moment, we assume that the variances $\sigma_{i,O}^2$ and $\sigma_{i,R}^2$ are known; in that case $c = z_{0.05 / 2}\, \sigma_{i,O}$. We will relax this assumption in \Cref{sec:methodology}.

  \paragraph{False directional claims} To formalize false directional claims in terms of the parameters of model~\eqref{eq:trunc-model}, we note that a type S error occurs when a statistically significant finding gets the sign of the parameter wrong:
  \[
    H_i^{S,O}:\; \sgn(\theta_{i,O}) \ne \sgn(\htheta_{i,O}), \qquad\text{ where } \sgn(x) =
    \begin{cases}
      +1, & x > 0 \\
      -1, & x < 0 \\
      0, & x = 0
    \end{cases}.
  \]
  Note that $|\htheta_{i,O}|$ is always larger than $c$, so $\sgn(\htheta_{i,O}) \in \{-1,+1\}$. Letting $S_i = \sgn(\htheta_{i,O})$, we can rewrite the hypothesis as
  \[
    H_i^{S,O}:\; S_i\,\cdot\, \theta_{i,O} \le 0.
  \]
  Here $H_i^{S,O}$ is fundamentally data-dependent as it is determined by $S_i$. Nonetheless it is a meaningful hypothesis: when $S_i = +1$, we want to test the null that $\theta_{i,O} \le 0$; otherwise we want to test the null that $\theta_{i,O} \ge 0$. Our strategy is to condition on the value of $S_i$, since the null hypothesis is fixed again once we know $S_i$. We defer the discussion of valid testing of data-dependent hypotheses for now.

  The question of false directional claims, then, boils down to asking how many $H_i^{S,O}$ are true: a multiple testing problem. Our estimand, the proportion of type S errors that occurred, is $V / R$, where $V$ is the number of type S errors and $R$ is the number of ``discoveries,'' i.e.\ rejections. If we classify the hypotheses by whether $H_i^{S,O}$ is true and whether the test for $H_i^{S,O}$ is significant, then $V$ and $R$ correspond to the cell counts in \Cref{tbl:err}.
  \begin{table}[htbp]
    \centering
    \begin{tabular}{lccc}
      \toprule
      Original $p$-value & $H_i^{S,O}$ is true & $H_i^{S,O}$ is false & Total \\
      \midrule
      Significant & $V$ & $*$ & $R$ \\
      Not-significant & $*$ & $*$ & $*$ \\
      Total & $*$ & $*$ & $*$ \\
      \bottomrule
    \end{tabular}
    \caption{Classification of the hypotheses, in the style of \citet{Benjamini:1995cd}. Only $R$ is observed and we wish to infer on $V$.}
  \label{tbl:err}
  \end{table}

  In the multiple testing literature, $V/R$ is called the {\em directional false discovery proportion} (directional FDP, or $\FDPd$), the type S error analog of false discovery proportion \citep[FDP;][]{Benjamini:2000ka}. In addition to an estimate, we also provide an upper confidence bound for the directional FDP in \Cref{sec:methodology}. Both the estimator and the confidence bound are based on a ``$p$-curve'' analysis, i.e.\ an analysis of the distribution of significant $p$-values \citep{Simonsohn:2014fa}. We further modify these methods to evaluate the proposal to lower the statistical significance threshold by \citet{Benjamin:2018gh}.

  Although $\htheta_{i,R}$ is irrelevant to testing $H_i^{S,O}$, it is informative for the closely related question of whether $\htheta_{i,O}$ incorrectly predicts the direction of the effect in a replication study, i.e.
  \[
    H_i^{S,R}:\; S_i \,\cdot\, \theta_{i,R} \le 0.
  \]
  Note that $S_i$ is computed from the original study, so this hypothesis is a measure of external validity as to the (claimed) directions of effects. If an experimental result has external validity, then any directional claim about the true effect should apply not only to the original study, but also to direct replications thereof. We provide analogous methods for multiple testing of the hypotheses $H_i^{S,R}$.

  \paragraph{Effect shift} To assess the effect shift in a specific replication attempt, we can test the hypothesis $H_i^E:\; \theta_{i,O} = \theta_{i,R}$ (an exact replication). As \citet{Anderson:2016gs} noted, ``there is no such thing as exact replication''; nevertheless, exactness serves usefully as an idealized null hypothesis. By inverting a test for $H_i^E$ we can obtain a predictive interval for $\htheta_{i,R}$. Furthermore, by inverting tests for a related hypothesis $H_i^{E,\delta}:\; \theta_{i,O} - \theta_{i,R} = \delta$, we obtain a confidence interval for $\theta_{i,O} - \theta_{i,R}$, the effect shift in study $i$. Our methods explicitly take into account the truncation of $\htheta_{i,O}$.

  \paragraph{Effect size decline} The null hypothesis for effect size decline is closely related to effect shift, and can be formalized as the null hypothesis where the true effect size has declined by no more than a fraction $\rho \in [0, 1]$:
  \[
    H_i^{D,\rho}:\; S_i \,\cdot\, \theta_{i,R} \ge S_i \,\cdot\, (1 - \rho) \theta_{i,O}.
  \]
  If $S_i = +1$ and $\rho=0.25$, for example, rejecting $H_i^{D,\rho}$ amounts to an assertion that $\theta_{i,R} < 0.75 \,\theta_{i,O}$, i.e. the true effect declined by more than $25\%$, or is negative.

  In particular, if $\rho=0$ then $H_i^{D,0}$ is a one-sided version of $H_i^E$, and when $\rho=1$, $H_i^{D,1}$ is equivalent to $H_i^{S,R}$. We can subsequently ask how many of $H_i^{D,\rho}$ are false: another multiple testing problem. We provide two estimators (one overestimate and one underestimate) and confidence interval for the proportion of false $H_i^{D,\rho}$.

  To facilitate the rest of the paper, we recapitulate the notations above in \Cref{tbl:notation}.
  \begin{table}[htbp]
    \centering
    \begin{tabularx}{\textwidth}{lX}
      \toprule
      Symbol & Meaning \\
      \midrule
      $\theta_{i,O}$, $\theta_{i,R}$ & True effect sizes in the original and replication $i$-th studies \\
      $\htheta_{i,O}$, $\htheta_{i,R}$ & Effect size estimates in the original and replication $i$-th studies \\
      $\sigma_{i,O}$, $\sigma_{i,R}$ & Standard errors in the original and replication $i$-th studies \\
      $S_i$ & Sign of the original $i$-th study \\
      $H_i^{S,O}$, $H_i^{S,R}$ & Hypothesis that a type S error has occurred in the original and replication $i$-th studies  \\
      $H_i^E = H_i^{E,0}$, $H_i^{E,\delta}$ & Hypothesis that the true effect has shifted by $\delta$ from the original to the replication $i$-th study \\
      $H_i^{D,\rho}$ & Hypothesis that the true effect size has declined by no more than the fraction $\rho$ \\
      \bottomrule
    \end{tabularx}
    \caption{Summary of notations introduced.}
  \label{tbl:notation}
  \end{table}

\subsection{Data-dependent hypotheses and conditional inference}

  Our hypotheses above, $H_i^{S,O}$, $H_i^{S,R}$ and $H_i^{D,\rho}$, are all innately data-dependent. While data-dependent hypotheses may at first sound unusual, they are commonplace in practice, for example when pilot studies are performed to generate hypotheses that are tested later on with fresh data. There is no inherent conceptual problem with testing these data-dependent hypotheses: intuitively, we understand that the test remains valid because the type I error rate is controlled for whatever hypothesis is selected, conditional on that hypothesis having been selected.

  Conditional inference is well-established in the statistical literature as a means of constructing valid confidence intervals for parameters that were selected in a data-dependent way \citep[e.g.][]{Sampson:2005,Zollner:2007,Weinstein:2013,Yekutieli:2012}. \citet{Fithian:2014ws} generalized the intuition about pilot studies to argue that a test of a data-dependent hypothesis is valid, so long as the type I error rate is controlled conditioned on the portion of the data that generated the hypothesis. For our hypotheses here, $S_i$ is the part of the data that determines the hypothesis: in effect, we can imagine ourselves in the position of having observed the signs of all the original estimators, but knowing nothing else about the data. At that stage, it is valid to formulate a hypothesis that depends on $S_i$, and plan to test it using the still-unobserved data: namely, $|\htheta_{i,O}|$ and $\htheta_{i,R}$.

  After conditioning on $S_i$, each hypothesis discussed above amounts to testing a fixed linear hypotheses about $(\theta_{i,R}, \theta_{i,O})$, the natural parameter of the truncated bivariate normal model~\eqref{eq:trunc-model}; as a result, they are all amenable to post-selection inference using the selective $z$-test built on the work of \citet{Lee:2016fv}. \Cref{sec:methodology} discusses the methodology in detail.

\subsection{Related work}

  There has been much commentary on how to define replicability for scientific experiments. \citet{Valentine:2011ga} pointed out that the definition should depend on the scientific context. For example, sometimes one may wish to test the robustness of conclusions to subpopulation differences, but in other times, to changes in experimental conditions. \citet{Goodman:2016bo} expanded on this, and gave a few useful definitions for what replicability is, such as {\em methods reproducibility}, {\em results reproducibility}, {\em inferential reproducibility}, etc., but stopped short of an operational statistical criterion for replicability. False directional claims and effect shift can be loosely interpreted as inferential and results reproducibility, respectively.

  Operationally, \citet{Valentine:2011ga} and \citet{Nosek:2017ei} proposed the metrics used in RP:P and \citet{Camerer:2018de}, a similar replication effort in experimental economics. These metrics however suffer the shortcomings discussed earlier, in that they do not answer a concrete statistical question and cannot disentangle selection bias from other explanations.

  In this article, our definitions of replicability are inspired primarily by the statistical literature on multiple testing and meta-analysis, such as the estimator in \citet{Storey:2002vj}, the FDP and directional FDP from \citet{Benjamini:2000ka,Benjamini:2005bv}, and the partial conjunction testing framework of \citet{Heller:2007,Benjamini:2008kj}. Related error rates have also been estimated before: \citet{Jager:2013dq} have modeled the $p$-value distributions under alternatives and the selection for statistical significance to estimate the FDR in the medical literature, accompanied by useful discussions from \citet{Gelman:2013ep,Goodman:2013kj,Ioannidis:2013fz}; in addition, \citet{Camerer:2018de} used Bayesian methods to estimate the false positive rate, instead of the FDR, for published social science results in {\em Nature} and {\em Science}.

  Furthermore, there are many past efforts to model and quantify selection bias, specifically using the RP:P dataset. For instance, \citet{Johnson:2017gu} considered a publication bias model where the probability of publication is a step function of the $p$-value, which is generalized nonparametrically in \citet{Andrews:2018vh}. The two analyses estimated that a statistically significant result was $200$ \citep{Johnson:2017gu} or $30$ \citep{Andrews:2018vh} times as likely to be published as a statistically insignificant one.

  Adjusting for selection, \citet{vanAert:2017,vanAert:2018} have combined the evidences from both the original and replication experiments to provide estimates for the effect sizes. Specifically with a truncated Gaussian model, \citet{Etz:2016gx} have also analyzed the RP:P dataset from a Bayesian perspective, and investigated the discrepancies between the original and replication studies. Our analysis provides a complementary point of view with frequentist hypothesis testing without any prior on the effect sizes, with the help of recent advances in post-selection inference, including primarily the selective $z$-test framework of \citet{Lee:2016fv}.

\subsection{Outline}

  \Cref{sec:methodology} details the methodology and assumptions used in this analysis, and is somewhat technical. \Cref{sec:analysis} applies the developed methodology to the RP:P dataset, summarizes and interprets the results. \Cref{sec:discussion} concludes.

\section{Methodology}
\label{sec:methodology}

  In this section we will construct an estimator for directional FDP, a test for the effect shift in replication $i$ and an estimator for the proportion of effect sizes that declined. We also use $X \gest Y$ to denote that $X$ is stochastically larger than $Y$. The index $i$ is suppressed when there is no risk of ambiguity. 

  Since we need a well-defined notion of direction to consider the proportion of false directional claims, we restrict our attention to univariate tests, namely $z$-, $t$-, $F(1, \cdot)$-tests or correlations. Thus, studies that are not univariate or have $p$-values greater than $\ao = 0.05$ are discarded: our estimates and analyses below consider only the $m = 68$ remaining studies with univariate structure and conventionally significant original $p$-values.

\subsection{Selection bias model}
  Model~\eqref{eq:trunc-model} assumes that results are only published if they achieved statistical significance at some conventional threshold level $\ao$, which is $0.05$ in our data. While this assumption is not true in the case of RP:P since some original $p$-values are above $0.05$, we note that the model can be relaxed to the following milder assumption:
  \begin{assumption}
    \label{ass:sig-enough}
    $p_O < \ao$ is ``significant enough'': a result with $p_O < \ao$ would be equally likely to be published (or selected for replication), if the $p$-value were some other statistically significant value.
  \end{assumption}

  \Cref{ass:sig-enough} allows some significant $p$-values to go unpublished. If it holds, then we can model the original test statistics as following their theoretical distribution, truncated to the event where the corresponding $p$-values are below $\ao$, as in Model~\ref{eq:trunc-model}.

  Note that \Cref{ass:sig-enough} contemplates a fairly straightforward mechanism for selection on statistical significance, which may not be adequate to describe the effects of more complex and difficult-to-model QRPs. In particular, $p$-hacking --- the iterative tweaking of an analysis until the $p$-value drops below the researcher's desired significance level $\ao$ --- is commonly suspected to produce a pileup of $p$-values just below the significance threshold \citep[see e.g.][]{Simonsohn:2014fa}. Because $p$-hacking is such a vaguely defined practice, it is unclear how we might incorporate it into our model, but in any case there is no evidence of a pileup just below $0.05$ in the original RP:P studies (see \Cref{fig:fdp-original}). We will reconsider the validity of \Cref{ass:sig-enough} in \Cref{sec:discussion}.

\subsection{False directional claims}

  We will adapt the method in \citet{Storey:2002vj} to estimate the directional FDP while accounting for selection bias. Furthermore, if we believe the chosen studies are representative of the publications in the journal or discipline \citep[e.g.][]{Stroebe:2016dj}, then this estimator can also be regarded as an estimator for the journal-wide or discipline-wide directional false discovery rate (FDR\textsubscript{dir}), the expectation of the directional FDP \citep{Benjamini:2005bv}.

  \paragraph{Adjusting for selection bias} While dividing a post-selection $p$-value by $\ao$ intuitively adjusts for selection, it is not immediately valid when the null is one-sided with a true effect not on the boundary. We demonstrate below that this adjustment typically remains valid even in this case.

  Recall that a valid $p$-value is a random variable that is stochastically larger than $\Uniform[0, 1]$ (i.e.\ superuniform) under the null hypothesis. If we only observe the original $p$-value when it is significant, it is not superuniform after selection under $H^{S,O}$, and it is therefore not valid for testing the hypothesis of a false directional claim. To adjust these $p$-values for selection, we follow the principle in \citet{Fithian:2014ws} by conditioning on the event that the $p$-values are selected, and also on the variable $S = \sgn(\htheta_O)$ which determines the hypothesis $H^{S,O}$ that we test. We consider two cases: when the original study is a one-sided test and when it is a two-sided test. As we will see, the adjustment in either case is to divide by $\ao$.

  First we consider the case where the original study was a one-sided test. Assume $p_O$ is a $p$-value for a test of the hypothesis $H_0:\; \theta_O \le 0$, in which case $S = +1$ deterministically (the opposite case with $H_0:\; \theta_O \ge 0$, and $S = -1$ deterministically, is directly analogous). Suppose $p_O$ is the original $p$-value, which we observe only when it is significant at the conventional threshold, i.e. when $p_O < \ao$. Under mild assumptions satisfied by both $z$-tests and $t$-tests,\footnote{namely, that the test statistic has monotone likelihood ratio in the parameter} $p_O \gest \Uniform[0, \ao]$ under $H^{S,O}$, in which case $p_O/\ao \gest \Uniform[0,1]$.

  Next we consider the case where $p_O$ is a $p$-value for a two-sided test of $H_0:\; \theta_O = 0$, and where $S = +1$ (the case with $S = -1$ is analogous). If $p_O^+$ was the original one-sided $p$-value for $H_0:\; \theta_O\le 0$, then $p_O = 2p_O^+$ when $S=+1$ ($p_O = 2 - 2p_O^+$ if $S = -1$). In our truncated model, under the same assumptions as above and conditional on $S = +1$, $p_O^+ \gest \Uniform[0, \ao/2]$ and therefore $p_O/\ao = 2p_O^+/\ao \gest \Uniform[0, 1]$ under $H^{S,O}$. We write $p_O'=p_O/\ao$ for the adjusted $p$-value.

  \paragraph{Inference on FDP: estimate and upper confidence bound} Using the adjusted original $p$-values, we can estimate the directional FDP in the original studies. Recall from \Cref{tbl:err} that
  \begin{align*}
    R &= \#\{p_{i,O} \le \ao\} = m,\\
    V &= \#\{p_{i,O} \le \ao \text{ and } H_i^{S,O} \text{ is true}\}.
  \end{align*}
  Since all of the studies were deemed discoveries, $R=m$ is the total number of studies here. \Cref{tbl:err-fdp} classifies the $m$ conventionally significant studies according to whether $H_i^{S,O}$ is true and whether the adjusted $p$-value is larger than some fixed value $\lambda$ in $(0,1)$, e.g.\ $\lambda = 0.5$.
  \begin{table}[htbp]
    \centering
    \begin{tabular}{lccc}
      \toprule
      Adjusted $p$-value & $H_i^{S,O}$ is true & $H_i^{S,O}$ is false & Total \\
      \midrule
      $p_{i,O}' < \lambda$ & $*$ & $*$ & $*$ \\
      $p_{i,O}' \ge \lambda$ & $U$ & $*$ & $B$ \\
      Total & $V$ & $*$ & $R = m$ \\
      \bottomrule
    \end{tabular}
    \caption{Classification of the $R=m$ significant original studies. Here only $R$ and $B$ are observed, and we wish to infer on $V$.}
  \label{tbl:err-fdp}
  \end{table}

  Note that $B = \#\{\lambda\ao \le p_{i,O} < \ao\}$ from \Cref{tbl:err-fdp} is observable, while $V$ and $U$ are not. Under the one-sided null, the $p$-value is superuniform, and so 
  \begin{equation}
    B \gest U \gest \Binomial(V, 1-\lambda).
  \label{eq:Bgest}
  \end{equation}
  As a result, $\EE[B] \ge (1-\lambda)V$ and a conservative (upwardly biased) estimator of the directional FDP is
  \[
    \hFDPd = \frac{B}{(1 - \lambda)R}.
  \]
  This estimate is conservative in the sense that it overestimates the type I error, and is equivalent to the estimator $\hat{\pi}_0$ of the true null proportion in \citet{Storey:2002vj}. Using $\lambda = 0.5$ and $\ao = 0.05$, the estimate boils down to
  \[
    \hFDPd = \frac{2}{m} \cdot \#\{0.025 \le p_{i,O} < 0.05\}.
  \]

  While the above is formally an estimator for the number of directional errors, it can be interpreted practically as an estimate of the fraction of directional claims where {\em either} the direction is wrong {\em or} the effect has a negligible magnitude, cf.\ type M error from \citet{Gelman:2014cd}. This is because $p$-values whose effect sizes are very close to zero are nearly uniform and contribute to our estimator similarly as if the true effect were exactly zero.

  Additionally, we can exploit \eqref{eq:Bgest} to obtain an upper confidence bound for the directional FDP, by testing the hypothesis $H_0:\; V \ge v_0$, a partial conjunction hypothesis investigated in \citet{Heller:2007}. Here we combine only the coarse information of whether each $p$-value is greater than $\lambda$,\footnote{More precisely, we count number of $p$-values that are greater than $\lambda$ and consider its distribution under the partial conjunction null hypothesis} and reject for small values of $B$. We can compute the largest $v_0$ such that the test still accepts, which gives an upper confidence bound of $V$. Dividing this bound by $R$ gives an upper confidence bound for the directional FDP.

  \paragraph{Directional FDP at smaller thresholds} One proposal to address the replicability crisis is to lower the conventional significance threshold from $\ao=0.05$ to some smaller value $\alpha$, such as $0.005$ \citep{Benjamin:2018gh}. As suggested by \citet{Goodman:2013kj}, an empirical method to evaluate the hypothetical scenario with a smaller threshold can be helpful. We now discuss methods for inference on the directional FDP for those studies with $p_O < \alpha < \ao$, based on comparing the number of adjusted $p$-values below $\alpha$ with the number above $\lambda\ao$, for some $\lambda > \alpha/\ao$. We call this method the {\em external comparison method} in contrast to the earlier {\em internal comparison method} that bases on \eqref{eq:Bgest}. This method will be less conservative as we are not constrained to only using the $p$-values in $[0, \alpha)$.
  
  Let $N \le m$ denote the total number of original $p$-values in $[0,\alpha) \cup [\lambda\ao, \ao)$ (or equivalently, the number of adjusted $p$-values in $[0,\alpha') \cup [\lambda, 1)$ for $\alpha' = \alpha/\ao$). \Cref{tbl:big-small} classifies these $N$ studies according to whether $H_i^{S,O}$ is true and whether the adjusted $p$-value is larger than $\lambda$ or smaller than $\alpha'$. The numbers of false directional claims and all directional claims under the hypothetical threshold are $V_\alpha$ and $R_\alpha$, respectively. Auxiliary counts, $T_\alpha$ and $W$, are defined according to \Cref{tbl:big-small} as well. The directional FDP, $V_\alpha / R_\alpha$, remains as our quantity of interest.
  
  \begin{table}[htbp]
    \centering
    \begin{tabular}{lccc}
      \toprule
      Adjusted $p$-value & $H_i^{S,O}$ is true & $H_i^{S,O}$ is false & Total \\
      \midrule
      Small ($p_{i,O}' < \alpha'$) & $V_\alpha$ & $T_\alpha$ & $R_\alpha$ \\
      Big ($p_{i,O}' \ge \lambda$) & $U$ & $W$ & $B$ \\
      Total & $N_0$ & $*$ & $N$ \\
      \bottomrule
    \end{tabular}
    \caption{Classification of the $N \le m$ original studies with adjusted $p$-values in $[0, \alpha'] \cup [\lambda, 1]$. Only $R_\alpha$, $B$ and $N$ are observed. Auxiliary unobserved quantities, $N_0$, $T_\alpha$ and $R_\alpha$, are defined accordingly. Our goal is to infer on $V_\alpha$.}
    \label{tbl:big-small}
  \end{table}
  
  Our method is inspired by the following stochastic inequality.
  \begin{lemma}
    Conditional on $N$, $T_\alpha$ and $W$, we have
    \begin{equation}
      B \mid N, T_\alpha, W \gest \Binomial(N - T_\alpha, \beta).
    \label{eq:B-NTW}
    \end{equation}
  \end{lemma}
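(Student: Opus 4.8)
The plan is to condition on the signs $S_i$ of all original studies, so that the hypotheses $H_i^{S,O}$ become fixed and each study is unambiguously either a true null (contributing to the $V_\alpha,U$ column) or a false null (contributing to $T_\alpha,W$), and then to isolate the contribution of the true nulls to $B$. After this conditioning the adjusted $p$-values $p_{i,O}'$ are independent across $i$, and for every true-null study the post-selection superuniformity established above gives $p_{i,O}' \gest \Uniform[0,1]$. Writing $\alpha' = \alpha/\ao$ and $\beta = (1-\lambda)/(1-\lambda+\alpha')$ --- the chance that a uniform variate lands in the ``Big'' region $[\lambda,1)$ given that it lands in the selection region $[0,\alpha')\cup[\lambda,1)$, which is disjoint because $\lambda > \alpha' $ --- I would first prove the per-study bound: conditional on a true-null study being selected, its probability of being Big is at least $\beta$. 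This follows since superuniformity gives $\PP(p_{i,O}' \ge \lambda) \ge 1-\lambda$ and $\PP(p_{i,O}' < \alpha') \le \alpha'$, and the map $(a,b)\mapsto a/(a+b)$ is increasing in $a$ and decreasing in $b$, so the conditional Big probability equals $a/(a+b)$ with $a \ge 1-\lambda$, $b\le\alpha'$, hence is at least $\beta$.

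The second step handles the conditioning, which I expect to be the main obstacle: the event $\{N,T_\alpha,W\}$ mixes information about the true nulls (through $N$) with information about the false nulls (through $T_\alpha,W$), so I must check that it does not corrupt the bound on the true-null Big count $U$. The resolution is that, after conditioning on the signs, the true-null and false-null studies are independent. Since $T_\alpha+W$ is the number of selected false nulls, conditioning on $(N,T_\alpha,W)$ is the same as conditioning on $(N_0,T_\alpha,W)$ with $N_0 = N-T_\alpha-W$ the number of selected true nulls, and by independence the conditional law of the true-null configuration depends only on $N_0$. For each realization of the actual set $\mathcal S$ of selected true nulls with $|\mathcal S|=N_0$, the Big indicators are independent Bernoullis each with success probability at least $\beta$, so $U\mid\mathcal S \gest \Binomial(N_0,\beta)$ by a standard coupling; averaging over $\mathcal S$ (a mixture of laws each dominating the same binomial) preserves the dominance, giving $U \mid N,T_\alpha,W \gest \Binomial(N_0,\beta)$.

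Finally I would reassemble $B=U+W$. Conditional on the event, $W$ is a fixed constant, so $W \gest \Binomial(W,\beta)$ trivially (the binomial is supported on $\{0,\dots,W\}$); combining this with the previous bound and the independent-sum identity for binomials yields
\[
  B = U+W \gest \Binomial(N_0,\beta)+W \gest \Binomial(N_0,\beta)+\Binomial(W,\beta) = \Binomial(N_0+W,\beta) = \Binomial(N-T_\alpha,\beta),
\]
which is exactly \eqref{eq:B-NTW}, using that $N_0+W = N-T_\alpha$ is a fixed number once we condition. The only delicate points are the mixture-preservation of stochastic order and the bookkeeping that $N-T_\alpha = N_0+W$; both become routine once the independence of the true-null and false-null studies is made explicit.
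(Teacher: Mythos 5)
Your proof is correct and follows essentially the same route as the paper's: bound the per-study probability that a selected true null lands in the ``Big'' region by $\beta$, note the independence across studies, and assemble $B = U + W \gest \Binomial(N_0,\beta) + W = \Binomial(N-T_\alpha,\beta)$ after conditioning. You simply spell out the steps the paper leaves implicit --- the superuniformity-plus-monotonicity derivation of the $\beta$ bound, the reduction of conditioning on $(N, T_\alpha, W)$ to conditioning on $N_0$, and the mixture argument --- all of which check out.
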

  \begin{proof}
    All adjusted $p$-values are independent, and are either small ($p \le \alpha'$) or big ($p \ge \lambda$). The adjusted $p$-values corresponding to a true null are big with probability at least $\beta = \frac{1 - \lambda}{1 - \lambda + \alpha'}$. We proceed to condition on $T_\alpha$ and $W$, so they are now considered deterministic. So the total number of big adjusted $p$-values, $B$, satisfies
    \[
      B = U + W \gest \Binomial(N-N_0, \beta) + W \gest \Binomial(N-T_\alpha, \beta).
    \]
  \end{proof}

  With \eqref{eq:B-NTW}, we can estimate $N - T_\alpha$ conservatively with $B / \beta$. Since $V_\alpha  = N - T_\alpha - B$, a reasonable estimator for the directional FDP is
  \[
    \hFDPd = \frac{1 - \beta}{\beta} \cdot \frac{B}{R_\alpha}.
  \]
  Furthermore \eqref{eq:B-NTW} gives us a $95\%$ upper confidence bound for the directional FDP:
  \[
    \FDPd^* = \frac{Q - B}{R_\alpha}, \qquad\text{where } Q = \max\{q: \PP[\Binomial(q, \beta) \ge B] \ge 0.95\}.
  \]

  \begin{proposition}
    The expectation of $\hFDPd$ is at least the expectation of the true directional FDP, and $\FDPd^*$ is greater than the true directional FDP, with probability at least $95\%$.
    \label{prop:prereg-test}
  \end{proposition}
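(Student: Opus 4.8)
The plan is to prove both halves by conditioning on the triple $(N, T_\alpha, W)$, which freezes every entry of \Cref{tbl:big-small} except the random count $B$ (equivalently $U$), and then feeding the resulting function of $B$ into the stochastic domination \eqref{eq:B-NTW}. Throughout I use the deterministic identities $V_\alpha = N - T_\alpha - B$ and $R_\alpha = N - B$ (read off the table) and adopt the convention $0/0 = 0$, so that on the event $R_\alpha = 0$ both $\hFDPd$ and the true directional FDP vanish and may be set aside.

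For the estimator, the obstacle is that $\hFDPd = \frac{1-\beta}{\beta}\cdot\frac{B}{R_\alpha}$ and the target $V_\alpha/R_\alpha$ are ratios sharing the random denominator $R_\alpha = N - B$, so expectations do not split. I would first substitute the identities to collapse the difference to a single function of $B$,
\[
  \hFDPd - \frac{V_\alpha}{R_\alpha} = \frac{B - \beta\,(N - T_\alpha)}{\beta\,(N - B)} =: g(B),
\]
and check by differentiating that $g$ is strictly increasing on $\{B < N\}$ (its derivative has numerator $\beta(N - \beta(N - T_\alpha)) > 0$). Because $g$ is increasing, the domination $B \gest \Binomial(N - T_\alpha, \beta)$ from \eqref{eq:B-NTW} gives $\EE[g(B) \mid N, T_\alpha, W] \ge \EE[g(\tilde B)]$ for $\tilde B \sim \Binomial(N - T_\alpha, \beta)$. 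The crux is then $\EE[g(\tilde B)] \ge 0$: writing $g(\tilde B) = \beta^{-1}(\tilde B - \beta(N - T_\alpha))\cdot(N - \tilde B)^{-1}$, both factors are increasing in $\tilde B$, so Chebyshev's association inequality yields
\[
  \EE[g(\tilde B)] \ge \beta^{-1}\,\EE[\tilde B - \beta(N - T_\alpha)]\cdot\EE[(N - \tilde B)^{-1}] = 0,
\]
the first expectation vanishing since $\EE[\tilde B] = \beta(N - T_\alpha)$. Averaging the conditional inequality over $(N, T_\alpha, W)$ gives $\EE[\hFDPd] \ge \EE[V_\alpha/R_\alpha]$.

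For the confidence bound, I would turn coverage into the validity of a one-sided binomial test. Since $\FDPd^*$ and $V_\alpha/R_\alpha$ share the denominator $R_\alpha$ and $V_\alpha = N - T_\alpha - B$, cancelling $B$ gives the exact equivalence
\[
  \FDPd^* \ge \frac{V_\alpha}{R_\alpha} \iff Q \ge N - T_\alpha.
\]
By construction $Q$ is the largest $q$ not rejected by the level-$0.05$ test of $H_0 : N - T_\alpha \ge q$ that rejects for small $B$, so its acceptance set is an initial segment $\{q \le Q\}$ and $\{Q \ge N - T_\alpha\}$ is exactly the event that this test accepts at the true value $q = N - T_\alpha$. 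Conditioning on $(N, T_\alpha, W)$, the domination \eqref{eq:B-NTW} gives $\PP[B \le c] \le \PP[\Binomial(N - T_\alpha, \beta) \le c] \le 0.05$ at the lower-tail cutoff $c$ for $q = N - T_\alpha$, so the conditional acceptance probability is at least $0.95$; integrating over $(N, T_\alpha, W)$ delivers $\PP[\FDPd^* \ge V_\alpha/R_\alpha] \ge 0.95$.

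I expect the real difficulty to sit entirely in the estimator half: the shared random denominator $R_\alpha = N - B$ means \eqref{eq:B-NTW} cannot be applied by linearity, and the decisive trick is recognizing that after conditioning the signed error $g(B)$ is monotone (so domination transfers) and that its binomial expectation is nonnegative by correlating the mean-zero numerator against the increasing $(N - \tilde B)^{-1}$. The only delicate point beyond this is the pole of $g$ at $R_\alpha = 0$, which arises solely when $T_\alpha = V_\alpha = 0$ and is neutralized by the $0/0 = 0$ convention; the confidence-bound half is then routine, amounting to reading \eqref{eq:B-NTW} as making a lower-tailed test conservative.
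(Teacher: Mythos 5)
Your proof is correct, and the estimator half takes a genuinely different route from the paper's. For the confidence bound the two arguments coincide: both reduce coverage to the event $Q \ge N - T_\alpha$ via the identity $B + V_\alpha = N - T_\alpha$ and then invoke the binomial domination \eqref{eq:B-NTW}; you merely spell out the test-inversion logic that the paper compresses into ``from the construction of $Q$'' (and your reading of $Q$ as the largest $q$ accepted by a lower-tailed binomial test is clearly the intended one --- the displayed definition of $Q$, taken literally, describes a set of $q$ that is unbounded above). For the estimator, the paper lower-bounds $B$ by $U = N_0 - V_\alpha$, reduces the conditional expectation to a function of the single variable $V_\alpha$, applies Jensen's inequality to the convex decreasing map $x \mapsto \frac{(1-\beta)N_0 - x}{\beta(x + T_\alpha)}$ in \eqref{eq:jensen}, and finishes with $\EE[V_\alpha \mid N, T_\alpha, W] \le (1-\beta)N_0$. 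You instead collapse the signed error to the increasing function $g(B) = \frac{B - \beta(N - T_\alpha)}{\beta(N - B)}$, push the stochastic dominance \eqref{eq:B-NTW} through $g$, and show $\EE[g(\tilde B)] \ge 0$ by Chebyshev's association inequality applied to the mean-zero increasing numerator and the increasing positive factor $(N - \tilde B)^{-1}$. Your version has the virtue of consuming the lemma exactly as stated --- the paper's Jensen step in fact bypasses \eqref{eq:B-NTW} and relies on the finer per-null fact $V_\alpha \mid N, T_\alpha, W \lest \Binomial(N_0, 1 - \beta)$ --- at the cost of importing the association inequality where the paper needs only convexity and a mean bound; both deliver the same stronger conditional guarantee noted in the paper's remark.

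One small blemish: on the degenerate event $T_\alpha = V_\alpha = 0$ you have $R_\alpha = 0$ but $B = N$, so $\hFDPd = \frac{1-\beta}{\beta}\cdot\frac{N}{0}$ is not a $0/0$ form and the convention you invoke does not literally make both quantities vanish. The cleaner statement is that there $g(B) = +\infty$ with a strictly positive numerator, so the conditional inequality holds trivially on that event (and if $T_\alpha = 0$ the Chebyshev step should likewise be preceded by the observation that $\EE[g(\tilde B)] = +\infty$ whenever $\PP[\tilde B = N] > 0$). The paper's own proof divides by $R_\alpha = V_\alpha + T_\alpha$ and silently ignores the same degenerate case, so this does not distinguish the two arguments.
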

  \begin{proof}
    For the estimator, we start by taking the expectation of $\hFDPd - \FDPd$, conditional on $N$, $T_\alpha$ and $W$:
    \begin{align}
      \EE[\hFDPd - \FDPd \mid N, T_\alpha, W] &= \EE\left[\frac{\frac{1-\beta}{\beta} B - V_\alpha}{R_\alpha} \,\middle|\, N, T_\alpha, W\right] \nonumber \\
      &\ge \EE\left[\frac{\frac{1-\beta}{\beta} (N_0 - V_\alpha) - V_\alpha}{V_\alpha + T_\alpha} \,\middle|\, N, T_\alpha, W\right] \nonumber \\
      &= \EE\left[\frac{(1-\beta) N_0 - V_\alpha}{\beta (V_\alpha + T_\alpha)} \,\middle|\, N, T_\alpha, W\right] \nonumber \\
      &\ge \frac{(1-\beta) N_0 - \EE[V_\alpha \mid N, T_\alpha, W]}{\beta (\EE[V_\alpha \mid N, T_\alpha, W] + T_\alpha)} \label{eq:jensen} \\
      &\ge 0 \label{eq:binom-ineq},
    \end{align}
    where \eqref{eq:jensen} follows from applying Jensen's inequality to the convex function $f(x) = \frac{(1-\beta) N_0 - x}{\beta(x + T_\alpha)}$, and \eqref{eq:binom-ineq} follows from $V_\alpha \mid N, T_\alpha, W \lest \Binomial(N_0, 1-\beta)$. Taking expectation on both sides completes the proof.

    For $\FDPd^*$, we can directly compute the probability that it is greater than $\FDPd$, conditional on $N$, $T_\alpha$ and $W$:
    \begin{align*}
      \PP[\FDPd^* \ge \FDPd \mid N, T_\alpha, W] &= \PP\left[\frac{Q - B}{R_\alpha} \ge \frac{V_\alpha}{R_\alpha} \,\middle|\, N, T_\alpha, W\right] \\
      &= \PP[Q \ge B + V_\alpha \mid N, T_\alpha, W] \\
      &= \PP[Q \ge N - T_\alpha \mid N, T_\alpha, W] \\
      &\ge 0.95,
    \end{align*}
    from the construction of $Q$. Taking expectation on both sides hence yields the desired marginal coverage.
  \end{proof}

  \begin{remark}
    This proof of conservativeness actually shows something stronger than marginal guarantees: the estimator and confidence upper bound are both conservative conditionally, even when we condition on the signs $S_i$.
  \end{remark}

  \paragraph{Methods using replication $p$-values} As mentioned in \Cref{sec:intro}, we can use the replication $p$-values in lieu of the adjusted original $p$-values above, providing an estimate and confidence bound for the frequency of when the $\htheta_O$ incorrectly predicts the replication effect direction. While this approach requires potentially costly replications in future applications, it provides valuable additional information. In particular, the replication $p$-values are more likely to be free of QRPs or $p$-hacking that may violate our assumption that adjusted $p$-values are superuniform under the null, providing more robust evidence regarding replicability. The corresponding estimator for unadjusted replication $p$-values with $\lambda= 0.5$ is 
  \[
    \hFDPd = \frac{2}{m} \cdot \#\{p_{i,R} \ge 0.5\}.
  \]

\subsection{Effect shift}

  We will derive a test for the hypothesis $H^E: \theta_O = \theta_R$ at level $0.05$. Our test is based on a normal distribution, so we start by demonstrating that the effect size estimates of the univariate studies can be reasonably modeled by our truncated bivariate normal distribution in model~\eqref{eq:trunc-model}. We classify these studies into two categories and provide a rough rationale in our definition of effect size in each category:
  \begin{inlinelist}
    \item $t$-tests and $F(1, \cdot)$ ANOVAs, where all independent variables are categorical; and,
    \item correlations and regressions, where one or more independent variables are continuous.
  \end{inlinelist}

  For a $t$-test or $F(1, \cdot)$ ANOVA, we can define the effect size as the noncentrality parameter, scaled for cell sizes. In other words, the $t$-statistic is distributed as $T \sim t_{df}(k\theta)$, for some real constant $k$ chosen based on the study design. For example, $k = \sqrt{n}$ for a one-sample $t$-test. When $df$ is sufficiently large, the $t$-statistic is approximated well by a $z$-statistic, and distributed approximately as
  \[
    T \sim N(k\theta, 1).
  \]
  For our analysis, we consider studies where the original and replication degrees of freedom are at least $30$.\footnote{The choice of $30$ complies with the analysis in \citet{Andrews:2018vh}. Further discussion on the approximation in available in the supplement.}

  For a (partial) correlation coefficient estimate, $R$, we can apply Fisher transformation \citeyearpar{Fisher:1921vq,Fisher:1924ve} to convert it into a $z$-statistic, which approximately follows
  \[
    \sqrt{n-3-p} \tanh^{-1}(R) \sim N(\sqrt{n-3-p}\,\theta, 1),
  \]
  where $p$ is the number of controlled covariates and $\theta$ is a quantity that can be taken as the effect size.

  In either case, the test statistic in $46$ studies can be transformed to an approximate $z$-score $Z \sim N(k \theta, 1)$ for some real constant $k$. Additional considerations in certain studies are detailed in the supplement.

  \paragraph{Adjusting for selection bias} We turn next to address the issue of post-selection inference. Again, we condition on the event where the $z$-scores are observed, but we do not need to condition on $S$ as the hypothesis $H^E$ is no longer random. Since the statistic is only observed if it is statistically significant, the original and replication $z$-statistics follow a truncated bivariate normal joint distribution:
  \[
    \begin{bmatrix}
      Z_O \\ Z_R
    \end{bmatrix} \sim N\left(\begin{bmatrix}
      k_O \theta_O \\ k_R \theta_R
    \end{bmatrix}, \begin{bmatrix}
      1 & 0 \\ 0 & 1
    \end{bmatrix}\right) 1_{\{Z_O \in A\}}.
  \]
  Here $A$ is the selection event, which contains the statistically significant values of $Z_O$. We are interested in testing $H^E: \theta_O = \theta_R$ and more generally the null hypothesis $H^{E,\delta}: \theta_O - \theta_R = \delta$, which can be inverted to yield a confidence interval.

  We cast this as a more general testing problem here to benefit later derivations on effect decline. Suppose we have a truncated bivariate distribution
  \[
    Z = \begin{bmatrix}
      Z_1 \\ Z_2
    \end{bmatrix} \sim N\left(\mu, \begin{bmatrix}
      1 & 0 \\ 0 & 1
    \end{bmatrix}\right) 1_{\{Z_1 \in A\}}, \quad \text{where } \mu = \begin{bmatrix}
      \mu_1 \\ \mu_2
    \end{bmatrix},
  \]
  and we want to test $\eta' \mu = \delta$ for some constant vector $\eta = (\eta_1, \eta_2)$ with $\eta_1 > 0$. Test for $H^E$ and $H^{E,\delta}$ are special cases where $\eta = (1/k_O, -1/k_R)$.

  We can perform this general testing problem with the {\em selective $z$-test}, based on the framework in \citet{Lee:2016fv}.

  \begin{definition}[Selective $z$-test]
  \label{def:sel-gauss}
    Let $\eta_\perp = (\eta_2, -\eta_1)$, $D = \eta' Z$ and $M = \eta_\perp' Z$. We now consider $M$ as a constant and test $\eta' \mu = \delta$ using the test statistic $D$ against the null distribution
    \[
      N(\delta, \|\eta\|^2) 1_{\left\{D \in \frac{\|\eta\|^2 A - \eta_2 M}{\eta_1}\right\}}.
    \]
    Specifically, we reject $\eta' \mu = \delta$ when $D$ is below the $\frac{0.05}{2}$-quantile or over the $(1 - \frac{0.05}{2})$-quantile of this null distribution.
  \end{definition}

  We proceed to show that this is a valid test by construction.

  \begin{proposition}
  \label{prop:sel-gauss}
    The selective $z$-test defined in \Cref{def:sel-gauss} has level $0.05$.
  \end{proposition}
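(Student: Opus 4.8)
The plan is to reduce \Cref{prop:sel-gauss} to the standard conditioning argument underlying the polyhedral selective inference of \citet{Lee:2016fv}: I will show that, after conditioning on the nuisance coordinate $M$, the statistic $D$ follows exactly the truncated Gaussian null distribution written in \Cref{def:sel-gauss}. Rejecting at the two-sided $0.05$ quantiles of that distribution then has conditional type I error exactly $0.05$, and the tower property lifts this to marginal level $0.05$.

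First I would change variables from $Z = (Z_1, Z_2)$ to $(D, M) = (\eta' Z, \eta_\perp' Z)$. Since $\eta_\perp = (\eta_2, -\eta_1)$ is orthogonal to $\eta$ with $\|\eta_\perp\| = \|\eta\|$, before truncation $(D,M)$ is jointly Gaussian with $\mathrm{Cov}(D,M) = \eta' \eta_\perp = 0$, hence independent, and $D \sim N(\eta'\mu, \|\eta\|^2)$ while $M \sim N(\eta_\perp'\mu, \|\eta\|^2)$. The key algebraic step is to re-express the selection event $\{Z_1 \in A\}$ in these coordinates. Inverting the linear map gives $\|\eta\|^2 Z_1 = \eta_1 D + \eta_2 M$, so $\{Z_1 \in A\}$ is equivalent to $\{\eta_1 D \in \|\eta\|^2 A - \eta_2 M\}$, and because $\eta_1 > 0$ this is precisely $\{D \in \frac{\|\eta\|^2 A - \eta_2 M}{\eta_1}\}$ --- exactly the truncation region appearing in \Cref{def:sel-gauss}.

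Next I would condition on $M$. Writing $\phi_\nu$ for the $N(\nu, \|\eta\|^2)$ density, the truncated joint density of $(D,M)$ is proportional to $\phi_{\eta'\mu}(d)\,\phi_{\eta_\perp'\mu}(m)\,1_{\{d \in (\|\eta\|^2 A - \eta_2 m)/\eta_1\}}$, the Jacobian of the linear map being a constant absorbed into the normalization. The factor $\phi_{\eta_\perp'\mu}(m)$ depends on $m$ alone and so drops out of the conditional law of $D$ given $M = m$, leaving $D \mid M = m$ distributed as $N(\eta'\mu, \|\eta\|^2)$ truncated to $\{d \in (\|\eta\|^2 A - \eta_2 m)/\eta_1\}$. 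Under the null $H_0:\eta'\mu = \delta$ this is exactly the null distribution of \Cref{def:sel-gauss}, which is continuous; hence its two-sided $0.05$-quantile rejection region carries conditional probability exactly $0.05$.

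Finally, since the conditional type I error equals $0.05$ for every value of $M$, taking $\EE_M$ of the conditional rejection probability yields marginal type I error $\EE_M[0.05] = 0.05$. The one step demanding genuine care is the coordinate change together with the verification that $\{Z_1 \in A\}$ transforms into precisely the stated region, and the pre-truncation independence of $D$ and $M$; once these are in hand the conditioning argument is routine. I do not anticipate a real obstacle beyond this bookkeeping, since the null distribution in \Cref{def:sel-gauss} was reverse-engineered to make exactly this conditional law come out right.
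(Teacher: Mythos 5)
Your proposal is correct and follows essentially the same route as the paper: reparametrize to $(D,M)=(\eta'Z,\eta_\perp'Z)$, rewrite the selection event $\{Z_1\in A\}$ as $\{D\in(\|\eta\|^2A-\eta_2M)/\eta_1\}$, and condition on $M$ to obtain the stated truncated Gaussian null law for $D$, from which the quantile test has exact conditional (hence marginal) level $0.05$. The only difference is that you spell out the linear-map inversion $\|\eta\|^2Z_1=\eta_1D+\eta_2M$ and the density factorization explicitly, details the paper leaves implicit.
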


  \begin{proof}
    Leveraging the fact that $\eta' \eta_\perp = 0$, we reparametrize the joint distribution of $(Z_1, Z_2)$ under the null such that $\delta$ is a parameter, i.e.
    \[
      \begin{bmatrix} D \\ M \end{bmatrix} =
      \begin{bmatrix} \eta' Z \\ \eta_\perp' Z \end{bmatrix}
      \sim N\left(
      \begin{bmatrix} \delta \\ \eta_\perp' \mu \end{bmatrix},
      \begin{bmatrix}
        \|\eta\|^2 & 0 \\
        0 & \|\eta\|^2
      \end{bmatrix}\right) 1_{\{Z_1 \in A\}}.
    \]
    In particular, the event $Z_1 \in A$ can be rewritten as
    \[
      D \in \frac{\|\eta\|^2 A - \eta_2 M}{\eta_1}.
    \]
    And so the distribution of $D$ conditional on $M$ under $H_0^\delta$ is a truncated Gaussian distribution,
    \[
      [D \mid M] \sim N\left(\delta, \|\eta\|^2\right) 1_{\left\{D \in \frac{\|\eta\|^2 A - \eta_2 M}{\eta_1}\right\}}
    \]
    and we obtain a valid test by rejecting when $D$ is smaller than the $\frac{0.05}{2}$-quantile or larger than the $\left(1 - \frac{0.05}{2}\right)$-quantile.
  \end{proof}

  The construction above is represented graphically in \Cref{fig:sel-gauss}, in the style of \citet{Lee:2016fv}. We can represent the observation $(Z_1, Z_2)$ as a point in $\mathbb{R}^2$. Conditioning on $M$ is equivalent to conditioning on $M / \|\eta_\perp\|$, which means we are now considering the conditional distribution on the truncated line $\ell$. The test statistic $D$, or equivalently $D / \|\eta\|$, indicates the position on $\ell$. Under the null that $\eta' \mu  = \delta$, the conditional distribution on $\ell$ is known and a valid $p$-value can be obtained, yielding the selective $z$-test.
  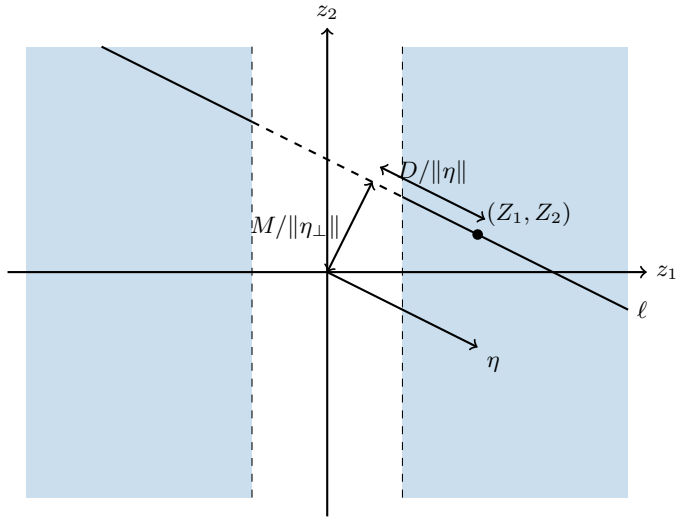
\begin{figure}[htbp]
    \centering
    \begin{tikzpicture}
    \fill[mid-blue] (-4, -3) -- (-4, 3) -- (-1, 3) -- (-1, -3) -- cycle;
    \fill[mid-blue] (4, -3) -- (4, 3) -- (1, 3) -- (1, -3) -- cycle;
    \draw[thick, ->] (0, -3.25) -- (0, 3.25) node[above] {$z_2$};
    \draw[thick, ->] (-4.25, 0) -- (4.25, 0) node[right] {$z_1$};
    \draw[dashed] (-1, -3) -- (-1, 3);
    \draw[dashed] (1, -3) -- (1, 3);
    \fill (2, 0.5) circle(2pt) node[above right] {$(Z_1, Z_2)$};
    \draw[thick, ->] (0, 0) -- (2, -1) node[below right] {$\eta$};

    \draw[thick, <->] (0, 0) -- (0.6, 1.2) node[midway, left] {$M / \|\eta_\perp\|$};
    \draw[thick] (-3, 3) -- (-1, 2);
    \draw[thick, dashed] (-1, 2) -- (1, 1);
    \draw[thick] (1, 1) -- (4, -0.5) node[right] {$\ell$};

    \draw[thick, <->] (0.7, 1.4) -- (2.1, 0.7) node[midway, above] {$D / \|\eta\|$};
    \end{tikzpicture}
    \caption{Graphical representation of the selective $z$-test. The observation $(Z_1, Z_2)$ is a point and the truncation on $Z_1$ means that the shaded area is the support of the joint distribution $(Z_1, Z_2)$. Conditioning on $M$ is the same as conditioning on $M / \|\eta_\perp\|$, so we now consider the conditional distribution on the truncated line $\ell$. The test statistic $D$ indicates the position on $\ell$. Under the null $H^{E,\delta}: \theta_1 - \theta_2 = \delta$, the conditional distribution on $\ell$ is known and a valid $p$-value can be obtained, yielding the selective $z$-test.}
  \label{fig:sel-gauss}
  \end{figure}

  \begin{remark}
    It is not necessary to use $\frac{0.05}{2}$- and $(1 - \frac{0.05}{2})$-quantiles of the null distribution, as long as the desired significance level is achieved under the null distribution. For example, a uniformly most powerful unbiased test can be used in lieu of a test with equal tail cutoffs. Furthermore, if we are interested in a one-sided hypothesis, e.g.\ $\eta' \mu \le 0$, we can reject on one tail only. This will be particularly useful for derivations about effect decline later.
  \end{remark}

  \paragraph{Interval estimation} Given a valid test $\phi(Z_O, Z_R)$ for testing $H^{E,\delta}: \theta_O - \theta_R = \delta$, we can obtain two intervals: a predictive interval for the replication effect size estimate, and a confidence interval for effect shifts.

  Under the null hypothesis $H^E: \theta_O = \theta_R$, $\PP[\phi(Z_O, Z_R) \text{ rejects}] = 0.05$, or equivalently,
  \[
    \PP[\{z_R: \phi(Z_O, z_R) \text{ accepts}\} \ni Z_R] = 0.95.
  \]
  Hence $\{z_R: \phi(Z_O, z_R) \text{ accepts}\}$ is a predictive interval for $Z_R$, which translates to a predictive interval for the point estimate $\htheta_R$ of the replication effect size.

  By the duality of hypothesis testing and confidence set, the set
  \[
    \{\delta: H^{E,\delta} \text{ is rejected}\}
  \]
  covers the difference of the original and replication effect sizes with probability $95\%$.

\subsection{Effect decline}

  We will estimate the proportion of effect sizes that declined by at least a fraction of $\rho$. Our procedure consists of two parts:
  \begin{inlinelist}
    \item for each study $i$, test and produce a $p$-value for the hypothesis $H_i^{D,\rho}$, and
    \item adapt the method for the directional FDP to estimate the proportion of $H_i^{D,\rho}$ that are false.
  \end{inlinelist}

  \paragraph{Adjusting for selection bias} As with the exactness test, we condition not only on the event where the $z$-scores are observed, but also on $S = \sgn(\htheta_O)$ as our hypothesis $H^{D,\rho}$ is determined by this random variable. In other words, we consider the $z$-statistic $Z_O$ to be drawn from the set $A_+$, where $A$ is the selection event from our test for effect shift and
  \[
    A_+ = A \cap \RR_+ = \{z_O: z_O \text{ is statistically significant}\} \cap \RR_+.
  \]
  Putting $Z_O$ and $Z_R$ together, they follow a truncated bivariate normal joint distribution:
  \[
    \begin{bmatrix}
      Z_O \\ Z_R
    \end{bmatrix} \sim N\left(\begin{bmatrix}
      k_O \theta_O \\ k_R \theta_R
    \end{bmatrix}, \begin{bmatrix}
      1 & 0 \\ 0 & 1
    \end{bmatrix}\right) 1_{\{Z_O \in A_+\}}.
  \]
  By convention RP:P chose $\htheta_O > 0$ so the hypothesis $H^{D,\rho}$ reduces to $\theta_{i,R} \ge (1-\rho) \theta_{i,O}$, or equivalently $\theta_{i,R} - (1-\rho) \theta_{i,O} \ge 0$. This can be tested using the selective $z$-test with $\eta = (1/k_O, -1/(1-\rho)k_R)$ and rejecting on one tail only.

  \paragraph{Inference on effect decline: estimates and confidence bounds} With the resulting $p$-values, our earlier methods on directional FDP can provide an overestimate and a upper confidence bound for the proportion of true $H^{D,\rho}$. Subtracting these from $1$ yields an underestimate and a lower confidence bound for the proportion of false $H^{D,\rho}$. On the other hand, by considering the complement of the hypothesis $H^{D,\rho}$, we can also provide an overestimate and an upper confidence bound for the proportion of false $H^{D,\rho}$. These estimators and bounds together provide an overestimate, an underestimate and a $90\%$ confidence interval for the proportion of effect sizes that at least declined by a fraction of $\rho$.

\section{Re-analysis of RP:P}
\label{sec:analysis}

\subsection{False directional claims}

  We implemented our method with $\lambda = 0.5$ to estimate the number of one-sided nulls and the directional FDP.\footnote{Choosing $\lambda = 0.5$ follows the convention in the multiple testing literature for a bias-variance trade off: if $\lambda$ is too small, many true discoveries are counted as false; if $\lambda$ is too big, the estimator can have large variance.} The adjusted original $p$-values and replication $p$-values are given in \Cref{fig:fdp-original,fig:fdp-replication} respectively. Using the original $p$-values, we estimate that $22$ of the $68$ ($32\%$) original directional claims are false, with a $95\%$ upper confidence bound of $47\%$. Using the replication $p$-values, we estimate that $32$ of the $68$ ($47\%$) original directional claims incorrectly predict the direction of the replication effect, with a $95\%$ upper confidence bound of $63\%$. In particular both of our FDP estimates are much lower than the $64\%$ which could be suggested by a naive reading of RP:P \citep[e.g.][]{Baker:2015kd}. These numbers are summarized again in \Cref{tbl:fdp-sim} later. Furthermore, while we can compute a lower confidence bound, it will always be $0\%$ as the data is obviously consistent with many null hypotheses being slightly false.
  \begin{figure}[htbp]
    \centering
    \begin{subfigure}[t]{0.49\hsize}
      \centering
      \includegraphics[width=\hsize]{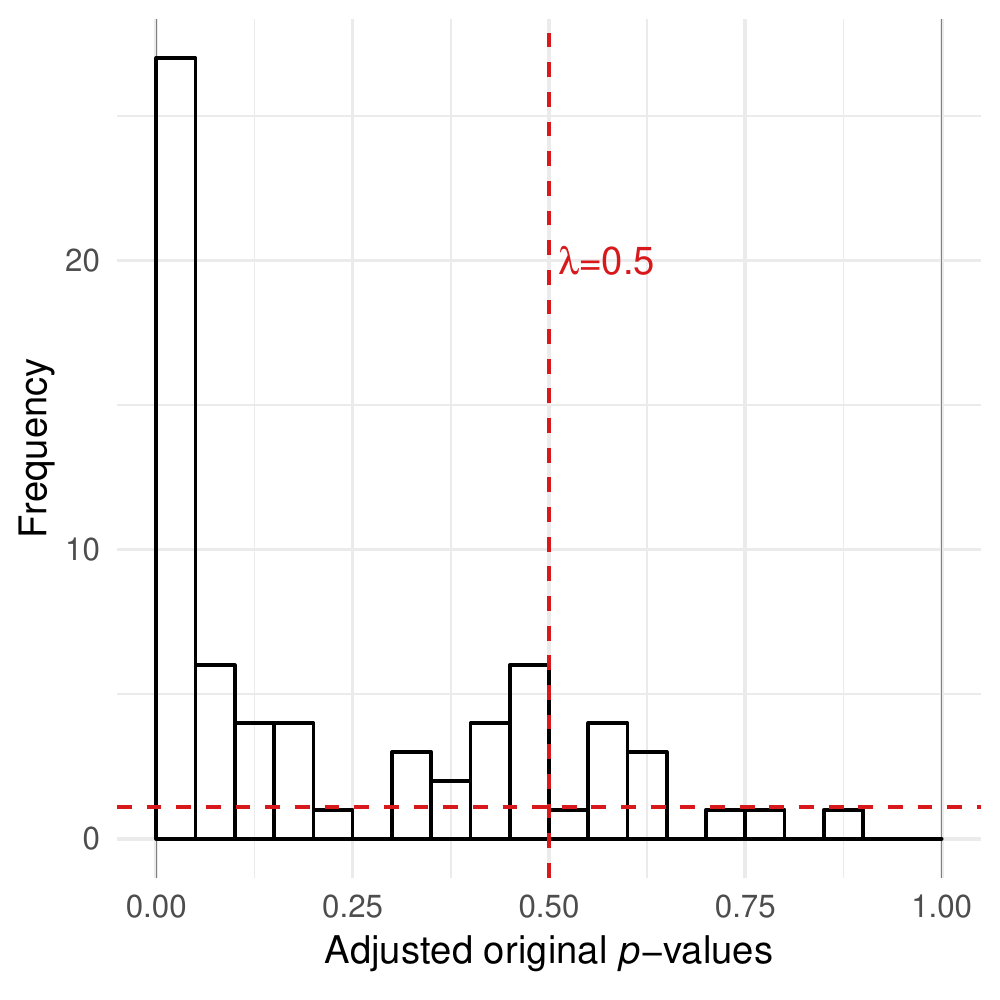}
      \caption{Histogram of the adjusted original $p$-values.}
    \label{fig:fdp-original}
    \end{subfigure}
    \hfill
    \begin{subfigure}[t]{0.49\hsize}
      \centering
      \includegraphics[width=\hsize]{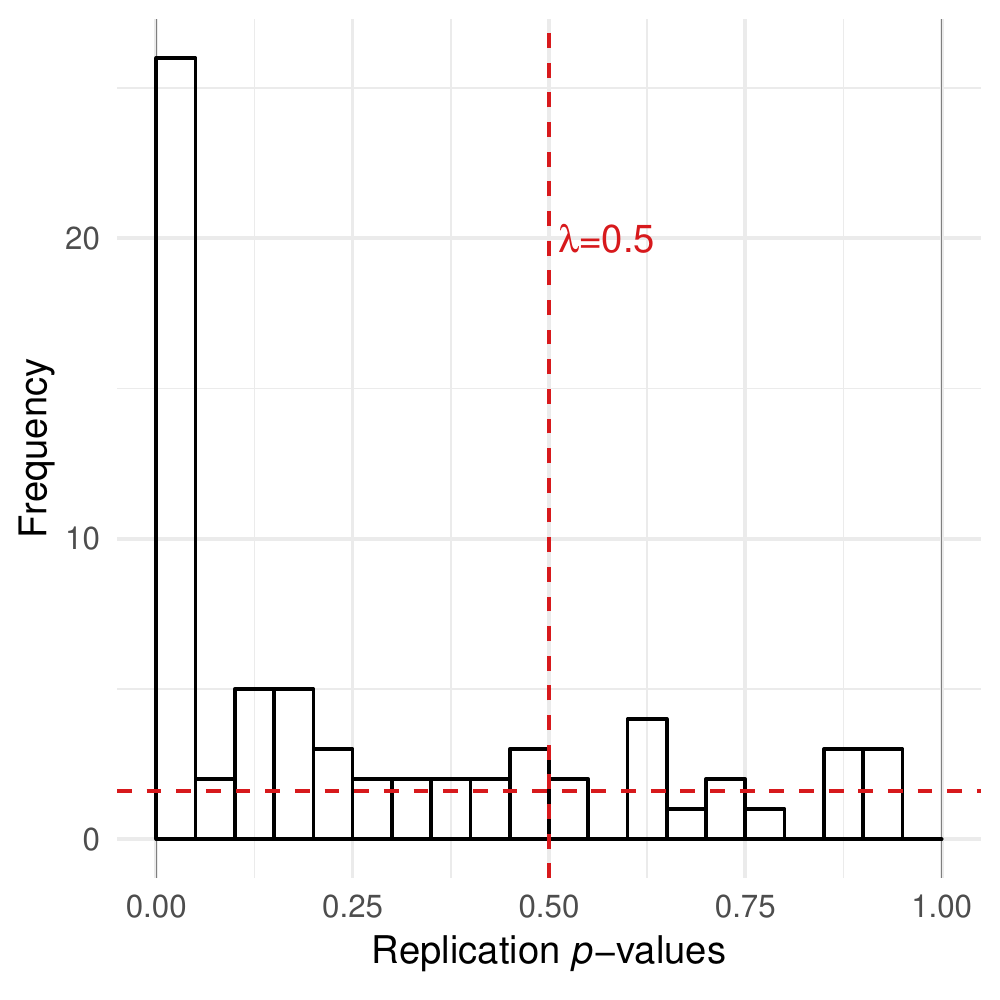}
      \caption{Histogram of the replication $p$-values.}
    \label{fig:fdp-replication}
    \end{subfigure}
    \caption{Method from \citet{Storey:2002vj} as demonstrated using histograms of $p$-values. We estimate the number of true nulls by conservatively assuming that every hypothesis right of the vertical red line to be true. Since the $p$-value under the null is superuniform, on average there are fewer null hypotheses left of the line than right of the line. Our overestimate of the number of true nulls in each bin is shown by the horizontal red line. A net excess of $p$-values above this line means false directional claims.}
  \end{figure}

  We proceeded to evaluate the proposal to reduce the statistical significance threshold \citep{Benjamin:2018gh}. We considered three candidates for the new threshold, $0.001$, $0.005$ and $0.01$, using the external comparison method. The directional FDP estimates and upper confidence bounds are given in \Cref{tbl:fdp-sim}.
  \begin{table}[htbp]
    \centering
    \begin{tabular}{rr@{\hspace*{0.25em}}r@{}lr@{\hspace*{0.25em}}r@{}lr@{\hspace*{0.25em}}rr@{\hspace*{0.25em}}r}
      \toprule
      \multirow{2}{*}{\raisebox{-\heavyrulewidth}{$\alpha$}} & \multicolumn{6}{c}{Adjusted original} & \multicolumn{4}{c}{Replication} \\
      \cmidrule(lr){2-7} \cmidrule(lr){8-11}
      & \multicolumn{3}{c}{Est.} & \multicolumn{3}{c}{U.C.B.} & \multicolumn{2}{c}{Est.} & \multicolumn{2}{c}{U.C.B.} \\
      \midrule
      $0.001$ & $0.4 / 22 =$&$2\%$&\textdagger & $2 / 22 =$&$9\%$&\textdagger & $6 / 22 =$&$27\%$ & $12 / 22 =$&$55\%$ \\
      $0.005$ & $2.2 / 33 =$&$7\%$&\textdagger & $6 / 33 =$&$18\%$&\textdagger & $12 / 33 =$&$36\%$ & $20 / 33 =$&$61\%$ \\
      $0.01$ & $4.4 / 41 =$&$11\%$&\textdagger & $9 / 41 =$&$22\%$&\textdagger & $16 / 41 =$&$39\%$ & $25 / 41 =$&$61\%$ \\
      $0.05$ & $22 / 68 =$&$32\%$ && $32 / 68 =$&$47\%$ && $32 / 68 =$&$47\%$ & $43 / 68 =$&$63\%$ \\
      \bottomrule
    \end{tabular}
    \caption{The directional FDP estimates and $95\%$ upper confidence bounds, using the adjusted original and replication $p$-values. The statistical significance level is $\alpha$. The external comparison method was used for computing the directional FDP estimates and the upper confidence bounds marked with daggers(\textdagger) above, as information of $p$-values between $\alpha$ and $0.05$ can improve the precision. The estimates and upper confidence bounds in the ``Replication'' column are relatively noisy due to the small number of $p$-values below the stricter rejection thresholds, and give little basis for any conclusions.}
  \label{tbl:fdp-sim}
  \end{table}

  These estimates corroborate \citet{Benjamin:2018gh}'s suggestion that reducing the statistical significance threshold may improve replicability, at least regarding the directional FDP of the original statistical hypotheses (of course, there is no way to account for potential change in researcher's behavior in response to the lowered threshold). Shall this be of interest, this method provides an empirical way to determine a better significance threshold, as no replications are needed. Nonetheless, potential effect heterogeneity is often a bigger concern. In this case, we are more concerned about the directional FDP for replications, which remains unacceptably high and requires replication experiments. Note, however, that a replication with low power could contribute to our estimates, even if there were no type S error.

\subsection{Effect shift}

  We performed the selective $z$-test for the hypothesis $H^E: \theta_O = \theta_R$ while adjusting for selection, where seven ($15\%$) studies are rejected. In contrast, without adjusting for selection, $18$ ($39\%$) studies are rejected at $0.05$ significance. If we wish to correct for multiplicity, we can apply Benjamini--Hochberg procedure \citeyearpar{Benjamini:1995cd}, which rules five ($11\%$) replication studies as inconsistent with the original studies at false discovery rate $0.10$.\footnote{The five rejected studies are \citet{Dodson:2008ks,vanDijk:2008br,PurdieVaughns:2008en,Farris:2008ev,Larsen:2008tu}.} Applying the more stringent Holm's method \citeyearpar{Holm:1979hl} to control the familywise error rate rules only the replication of \citet{Farris:2008ev} as inconsistent at familywise error rate $0.05$.

  We inverted the test for the hypothesis $H^E$, to yield a predictive interval for $Z_R$ and hence a predictive interval for the replication effect size estimate $\htheta_R$, shown in \Cref{fig:pi}. By definition $H^E$ is rejected when $\htheta_R$ is not included in the predictive interval. Adjusting for selection generally stretches the predictive intervals, resulting in fewer rejections.

  We also inverted the test for $H^{E,\delta}$ and obtained a confidence interval for the effect shifts, $\theta_O - \theta_R$, given in \Cref{fig:ci}. By construction the null hypothesis $H^E: \theta_O = \theta_R$ is rejected when the confidence interval does not include $0$. Adjusting for selection also generally lengthens the confidence intervals, resulting in fewer rejections.

  \begin{figure}[htbp]
    \centering
    \includegraphics[width=\hsize]{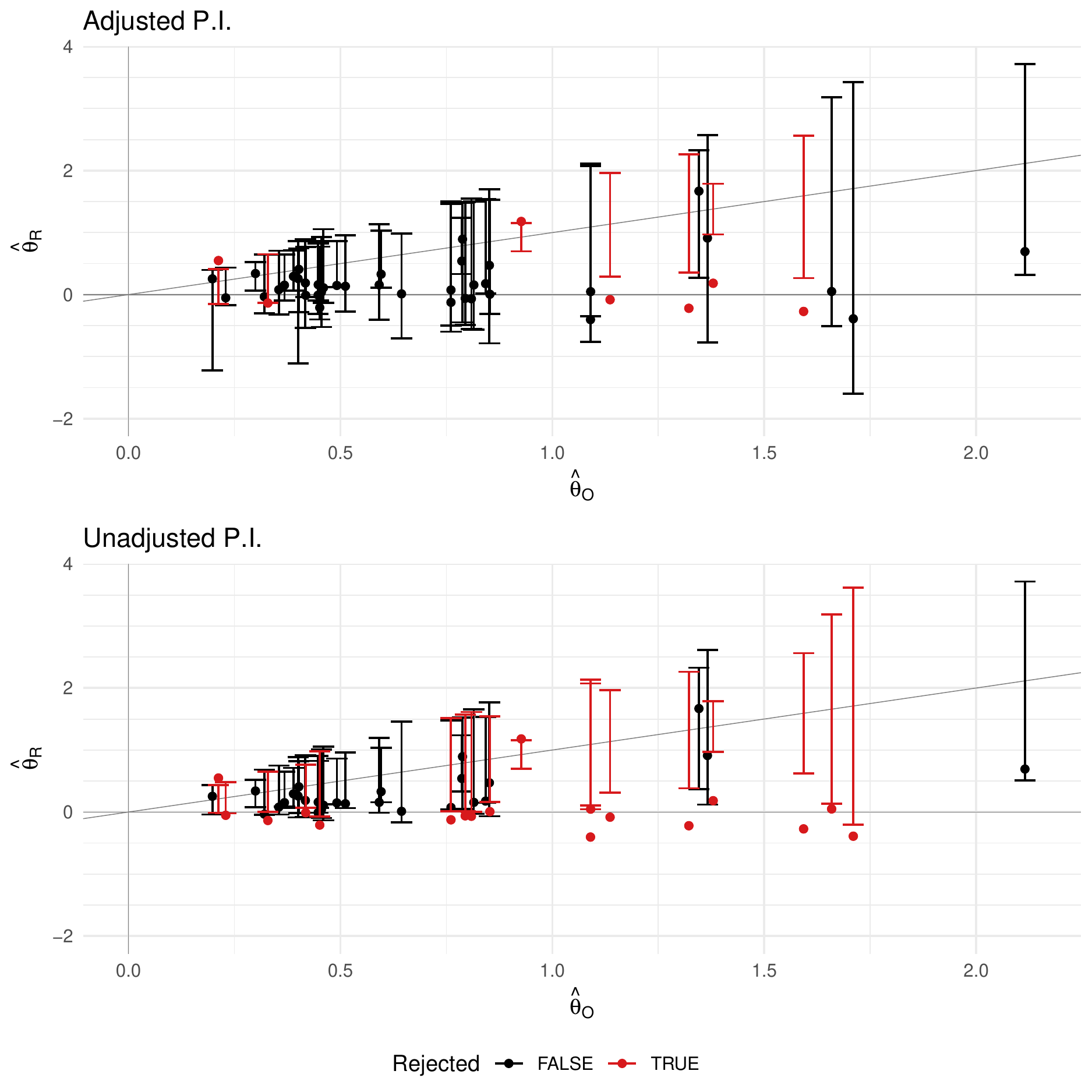}
    \caption{Predictive intervals for $\htheta_R$, both adjusted and unadjusted for selection, overlay with a plot of $\htheta_R$ against $\htheta_O$. Studies 36 and 145 are not shown here. By definition we reject $H_0: \theta_O = \theta_R$ whenever the replication effect size estimate lies outside of the predictive interval. The intervals are generally longer after adjusting for selection.}
  \label{fig:pi}
  \end{figure}
  \begin{figure}[htbp]
    \centering
    \includegraphics[width=\hsize]{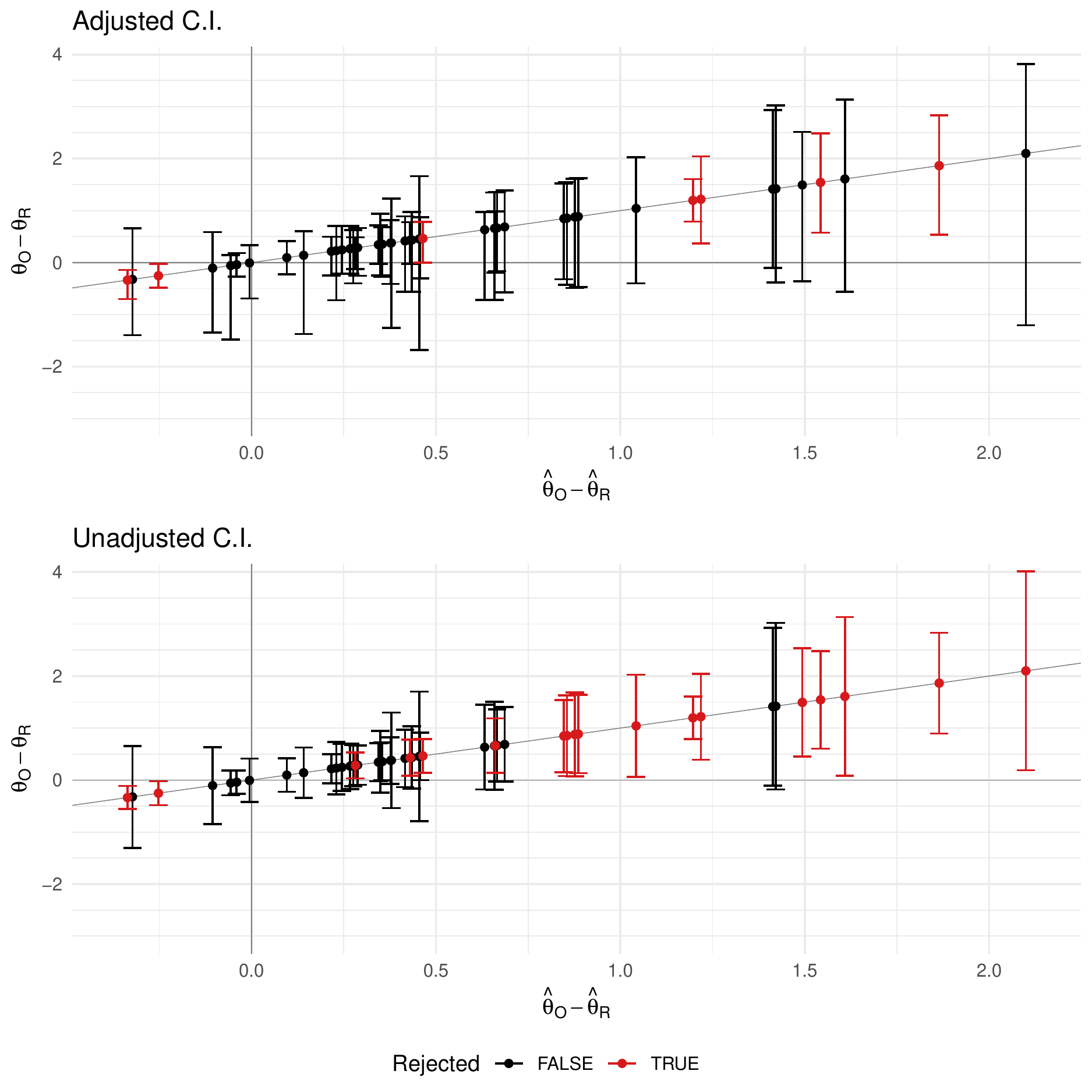}
    \caption{Confidence intervals for $\theta_O - \theta_R$, both adjusted and unadjusted for selection. By construction the null hypothesis $H_0: \theta_O = \theta_R$ is rejected when the confidence interval does not include $0$. Many of the adjusted intervals are fairly long as either the replication studies suffer low power or the original effect size estimate is near the rejection threshold. The intervals are generally longer after adjusting for selection.}
  \label{fig:ci}
  \end{figure}

  If all procedures are replicated perfectly, we should expect to reject $5\%$ of the tests on average, rather than the observed $15\%$, and after the Benjamini--Hochberg correction, there would be no rejection with $90\%$ probability. In other words, while selection bias can partly explain the discrepancies between the original and replication studies, it does not explain all of it. Nevertheless, the RP:P data cannot be taken as strong evidence of widespread failure by replication teams to satisfactorily repeat the same experiment performed in the original study. The lack of strong evidence is hardly surprising: if the original study lacks power \citep{Morey:2017} or $\htheta_O$ is closed to the rejection boundary, little can be said about $\theta_O$ and hence $\theta_O - \theta_R$. Furthermore, the replication sample sizes were determined based on the original effect size to achieve at least $80\%$ in power. Selection bias inflated the original effect size, leading to lower test power and statistically insignificant replications \citep{Etz:2016gx,Camerer:2018de}. The lack of information about $\theta_O - \theta_R$ is evident in generally wider confidence intervals after adjustment in \Cref{fig:ci}.

\subsection{Effect decline}

  Finally, we considered the proportion of effect sizes that declined. Using the selective $z$-test, we tested the hypothesis $H^D$, conditioning on the event where the $z$-scores are observed and the variable $S$. The resulting $p$-values are given in \Cref{fig:effect-decline}. Our underestimate and overestimate are $35\%$ ($=16/46$) and $100\%$ respectively, with a $90\%$ confidence interval of $(11\%, 100\%)$.
  \begin{figure}[htbp]
    \centering
    \includegraphics[width=0.8\textwidth]{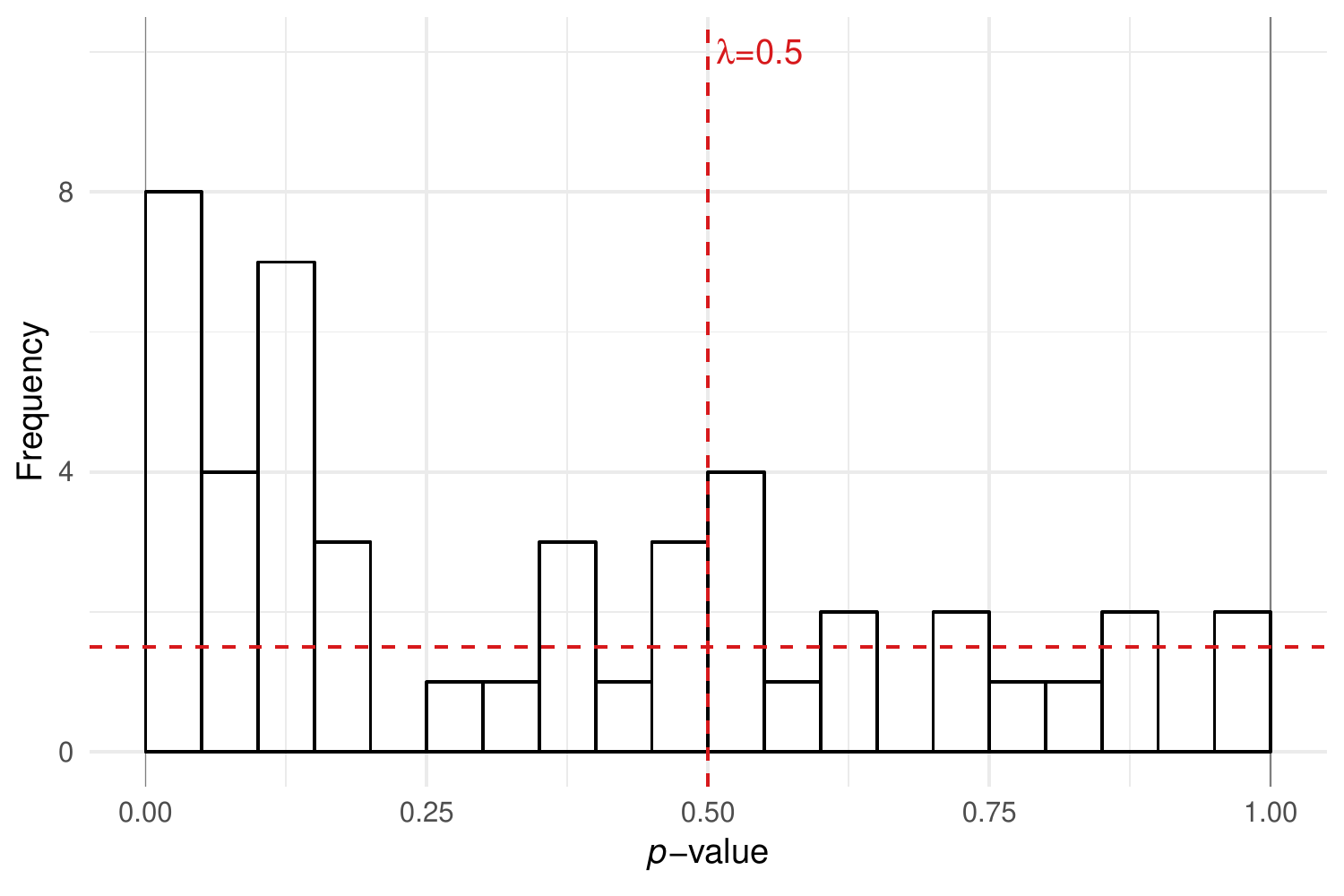}
    \caption{Histogram of the $p$-values for the null hypothesis $\theta_R \ge \theta_O$. $p$-values to the left gives more evidence for $\theta_R < \theta_O$ whereas $p$-values to the right gives more evidence for $\theta_R \ge \theta_O$. The estimate of the expected number of null $p$-values within each bin is given by the horizontal red line.}
  \label{fig:effect-decline}
  \end{figure}

  More generally, we used the hypothesis $H^{D,\rho}$ to estimate the proportion of effect sizes that declined by at least a fraction of $\rho$. The underestimate, overestimate and the $90\%$ confidence interval are given in \Cref{fig:effect-decline-range}. For example, we estimate that $10$ of the $46$ effect sizes ($22\%$) decreased by at least $25\%$, even after adjusting for selection on measurement noise. Note that this does not exclude explanations by other forms of selection, e.g.\ selecting a large effect when there is a random effect.
  \begin{figure}[htbp]
    \centering
    \includegraphics[width=0.8\textwidth]{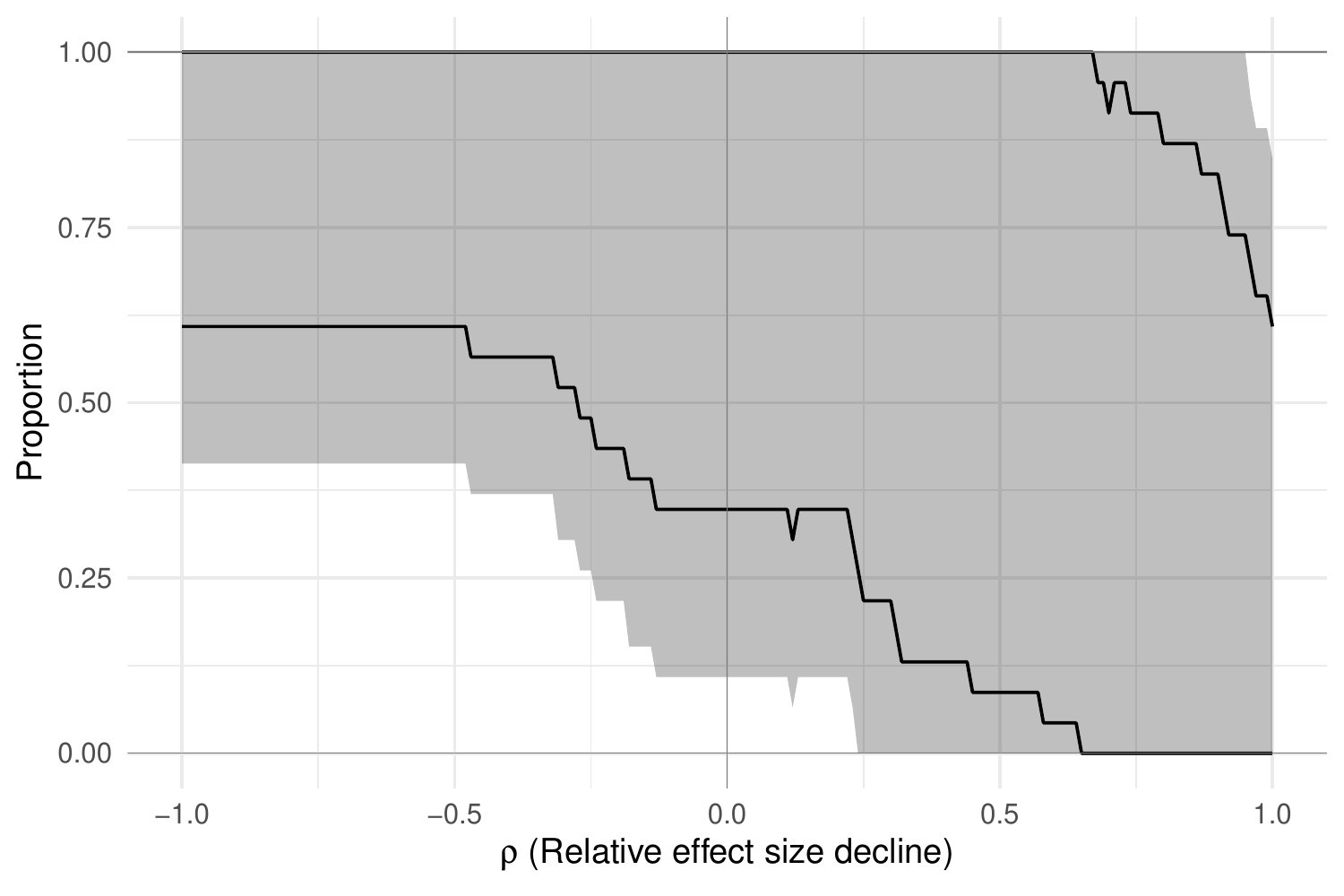}
    \caption{The underestimate, overestimate and the $90\%$ confidence interval. The lower black line is the underestimate, the high black line is the overestimate and the gray band is the $90\%$ confidence interval.}
  \label{fig:effect-decline-range}
  \end{figure}

\section{Discussion}
\label{sec:discussion}

\subsection{Importance of adjusting for selection bias}

  As we have seen, selection bias plays a powerful and pervasive role in shaping the data we observe in large-scale replication studies (and, by extension, the data we observe in published studies that have not yet been replicated!). It leads to many predictable pathologies and should be viewed as a proverbial ``elephant in the room'' whenever we discuss descriptive statistics computed from such studies. In particular, we should avoid leaping to any conclusions about how many false claims there were in the original studies, whether effect sizes declined or by how much, or which replication studies suffered from infidelities, until we have carefully ruled out the possibility that publication bias alone is to blame for whatever descriptive statistic we have computed.

  Fortunately, the truncated Gaussian model, properly combined with modern multiple testing and post-selection inference methods, opens many avenues for analyses that directly answer questions about true effect sizes with appropriate uncertainty quantification. We have explored several such avenues here \citep[see also][]{Andrews:2018vh} but many others are possible.

\subsection{Importance of statistical formality}

  In addition, we hope this article serves to advocate for the benefits of careful formal statistical modeling in analyzing replication studies, in place of (or in addition to) descriptive statistics. In particular, using vaguely specified models or eschewing models altogether can lead to analyses from which it is difficult to draw firm conclusions. For example, in \citet{OpenScienceCollaboration:2015cn}, McNemar's test was applied to a $2 \times 2$ contingency table of whether the original and replication studies are equally likely to be statistically significant. The very small $p$-value reported for this test establishes nothing more than that the original studies were selected to be statistically significant, a fact which is likely already known by most in the field. In fact, the test does not quite establish even that, because it is unclear whether this hypothesis would be true even without the effect of selection bias: The proportion of statistically significant $p$-values is a measure of the average power, which depends on the sample sizes, and the sample sizes often differed substantially between the original and replication studies.

  Another example is RP:P's use of sample correlation coefficients between independent and dependent variables as a standardized measure of effect size for comparison between the original and replication studies. This comparison implicitly assumes that the distribution of the independent variable is the same in the original and replication studies, an assumption that was violated by many of the replications. In an extreme case, an ANOVA in \citet{PurdieVaughns:2008en} with race as one of the factors used $40$ African Americans and $37$ Whites, but was replicated with $120$ African Americans and $1370$ Whites. With such a dramatic change in the distribution of an independent variable, there is no reason why the correlation coefficients should remain the same, as illustrated in the following example.

  \begin{example}
  \label{eg:unbal-t}
    A study with a two-sample $t$-test for some treatment condition is replicated. Suppose the treatment and control group are drawn from $N(1, 1)$ and $N(0, 1)$, respectively. If the ratio of the two group sizes changes from one study to another, the correlation coefficients may differ as well, even without any infidelities or hidden moderators. Borrowing the numbers from \citet{PurdieVaughns:2008en} for instance, if the original study contains $40$ treatment and $37$ control units, the true correlation coefficient is $0.45$, whereas in a replication with $120$ control and $1370$ treatment units the true coefficient is $0.26$ instead.
  \end{example}

  Replication projects similar to RP:P have since materialized, but few stated an explicit statistical hypothesis. For example, in economics, \citet{Camerer:2016cx} used the same flawed metric of proportion of statistically significant results in the original direction. A statistical analysis with explicitly stated models and hypotheses will give us more meaningful estimates, particularly valuable given how costly these large scale replication efforts are.

\subsection{Interpretation of effect shifts}

  While we have proposed several methods for quantifying discrepancies between the effect sizes in the original and replication studies, the data alone cannot tell us why they might differ. Several potential explanations include:
  \begin{enumerate}
  \item design failures, systematic biases or calculation errors in either the original or the replication study;
  \item major differences in experimental conditions between the original and replication studies, which most researchers would recognize {\em a priori} as likely to affect the results; which \citet{Gilbert:2016he} call {\em infidelities}; and
  \item minor differences in experimental conditions between the studies --- such as lighting, weather, or the passage of time --- which cannot all be controlled but whose effects may nevertheless alter the true effect size in unforeseeable ways, often referred to as {\em hidden moderators} \citep[e.g.][]{Srivastava:2015}.
  \end{enumerate}

  While there may be no sharp distinction in principle between infidelities and hidden moderators, there is a scientifically crucial difference between moderating factors that can be anticipated by experimenters and those that cannot. If we can anticipate in advance when replications are likely to fail by carefully evaluating their designs, we might hope to solve the problem simply by being more careful in setting up experiments. By contrast, if hidden moderators confound most attempts to replicate most psychological studies, it would raise profound questions about the entire enterprise of experimental psychology. In the extreme case, if even trivial changes to those conditions have large and unpredictable effects on most phenomena of interest, we might begin to despair of gaining generalizable knowledge about psychology through laboratory experimentation. 

  Our analyses point to several conclusions regarding effect shifts: First, that there are a few studies where we can be confident the effect in the replication study was significantly different than in the original study; second, that in aggregate, when effects do shift, they tend to decline (shift toward zero) in replications rather than increase; and third, that there is insufficient evidence to conclude that the vast majority of experimental effects simply evaporated upon replication. In particular, 83\% should not be treated as a reasonable estimator of the fraction of {\em true} effect sizes that declined; rather, it likely reflects that the estimates in the original studies overestimated their corresponding true effects due to selection bias.

  One possible explanation for systematically declining effect involves a subtler form of selection bias, where every experiment's effect size is random, buffeted by hidden moderators, and those experiments whose moderators primarily magnify the effect size are more likely to be published. That is, in the same way that experimenters select studies whose sampling error is large, they also selects for studies whose true effect size is larger than usual. Further systematic replication studies may help to shed light on which factors are most often the culprits in moderating true effect sizes, possibly improving the reliability of experiments and leading to new scientific insights \citep{Barrett:2015vl,Klein:2018}.

\subsection{Future work}

  As large-scale replicability studies are becoming more common in assessing the ``well-being'' of a scientific domain, this paper serves as a stepping stone for improving methodologies in future replicability studies.

  First, selection for significance is an inevitable consequence of the current scientific process. Our adjustments for selection allows not only better analysis, but also more informed design of future replication studies, e.g.\ better power calculations for and sizing of replications. While these adjustments are admittedly crude, they are necessitated by the limitations in the given data. With more available information, a better model for selection can be used. For example, with the advancement of preregistration, we can use the external comparison method to produce less conservative estimates of the directional FDP at level $\alpha = 0.05$ if we have more information about statistically nonsignificant studies. With more replications carried out, we can estimate the publication bias model in \citet{Andrews:2018vh} more precisely, which allows different propensity for publication for different statistical significant $p$-values as opposed to \Cref{ass:sig-enough}. Together with higher powered design in replications \citep[e.g.][]{Camerer:2018de}, we can enhance the precision of our estimators and power of our tests.

  Second, we emphasized the importance of statistical formality. Our proposed criteria are based on clearly defined parameters. While these criteria may not suit all needs in future replicability studies, additional formal hypotheses can also be analyzed under the post-selection inference framework similarly.

  With our proposed criteria and procedures, researchers can perform more informative inferences than the current practice, and provide a clearer picture of the replicability crisis.

\section*{Reproducibility}

  A git repository containing with the code generating the images in this article is available at \url{https://github.com/kenhungkk/assessing-replicability.git}.

\section*{Acknowledgment}

  We thank Marcel A L M van Assen, Yoav Benjamini, Dean Eckles, Philip B Stark, Jacob Steinhardt, Jonathan Taylor, Alexa Tulett, Stefan Wager, Daniel Yekutieli, and Bin Yu for helpful comments and discussions. In addition, we are grateful to our two anonymous reviewers for their helpful suggestions which have improved the paper.

\begin{supplement}
  \sname{Supplement A}\label{suppA}
  \stitle{Supplement to ``Statistical Methods for Replicability Assessment''}
  \slink[url]{https://github.com/kenhungkk/assessing-replicability/raw/public/supplement.pdf}
  \sdescription{We evaluate our approximation of $t$-distributions by normal distributions, as well as detail considerations made for individual studies.}
\end{supplement}

\Urlmuskip=0mu plus 1mu\relax
\bibliographystyle{imsart-nameyear}
\bibliography{papers,additional}

\begin{thebibliography}{58}

\bibitem[\protect\citeauthoryear{Achenbach}{2015}]{Achenbach:2015vi}
\begin{barticle}[author]
\bauthor{\bsnm{Achenbach},~\bfnm{Joel}\binits{J.}}
(\byear{2015}).
\btitle{{Many scientific studies can't be replicated. That's a problem}}.
\bjournal{The Washington Post}.
\end{barticle}
\endbibitem

\bibitem[\protect\citeauthoryear{Amrhein, Korner-Nievergelt and
  Roth}{2017}]{Amrhein:2017en}
\begin{barticle}[author]
\bauthor{\bsnm{Amrhein},~\bfnm{Valentin}\binits{V.}},
  \bauthor{\bsnm{Korner-Nievergelt},~\bfnm{Fr{\"a}nzi}\binits{F.}} \AND
  \bauthor{\bsnm{Roth},~\bfnm{Tobias}\binits{T.}}
(\byear{2017}).
\btitle{{The earth is flat (\emph{p}~>~0.05): significance thresholds and the
  crisis of unreplicable research}}.
\bjournal{PeerJ}
\bvolume{5}
\bpages{e3544}.
\end{barticle}
\endbibitem

\bibitem[\protect\citeauthoryear{Anderson et~al.}{2016}]{Anderson:2016gs}
\begin{barticle}[author]
\bauthor{\bsnm{Anderson},~\bfnm{Christopher~J}\binits{C.~J.}},
  \bauthor{\bsnm{Bahn{\'\i}k},~\bfnm{{\v{S}}t{\v e}p{\'a}n}\binits{{\v{S}}.}},
  \bauthor{\bsnm{Barnett-Cowan},~\bfnm{Michael}\binits{M.}},
  \bauthor{\bsnm{Bosco},~\bfnm{Frank~A}\binits{F.~A.}},
  \bauthor{\bsnm{Chandler},~\bfnm{Jesse}\binits{J.}},
  \bauthor{\bsnm{Chartier},~\bfnm{Christopher~R}\binits{C.~R.}},
  \bauthor{\bsnm{Cheung},~\bfnm{Felix}\binits{F.}},
  \bauthor{\bsnm{Christopherson},~\bfnm{Cody~D}\binits{C.~D.}},
  \bauthor{\bsnm{Cordes},~\bfnm{Andreas}\binits{A.}},
  \bauthor{\bsnm{Cremata},~\bfnm{Edward~J}\binits{E.~J.}},
  \bauthor{\bsnm{Della~Penna},~\bfnm{Nicholas}\binits{N.}},
  \bauthor{\bsnm{Estel},~\bfnm{Vivien}\binits{V.}},
  \bauthor{\bsnm{Fedor},~\bfnm{Anna}\binits{A.}},
  \bauthor{\bsnm{Fitneva},~\bfnm{Stanka~A}\binits{S.~A.}},
  \bauthor{\bsnm{Frank},~\bfnm{Michael~C}\binits{M.~C.}},
  \bauthor{\bsnm{Grange},~\bfnm{James~A}\binits{J.~A.}},
  \bauthor{\bsnm{Hartshorne},~\bfnm{Joshua~K}\binits{J.~K.}},
  \bauthor{\bsnm{Hasselman},~\bfnm{Fred}\binits{F.}},
  \bauthor{\bsnm{Henninger},~\bfnm{Felix}\binits{F.}},
  \bauthor{\bparticle{van~der} \bsnm{Hulst},~\bfnm{Marije}\binits{M.}},
  \bauthor{\bsnm{Jonas},~\bfnm{Kai~J}\binits{K.~J.}},
  \bauthor{\bsnm{Lai},~\bfnm{Calvin~K}\binits{C.~K.}},
  \bauthor{\bsnm{Levitan},~\bfnm{Carmel~A}\binits{C.~A.}},
  \bauthor{\bsnm{Miller},~\bfnm{Jeremy~K}\binits{J.~K.}},
  \bauthor{\bsnm{Moore},~\bfnm{Katherine~S}\binits{K.~S.}},
  \bauthor{\bsnm{Meixner},~\bfnm{Johannes~M}\binits{J.~M.}},
  \bauthor{\bsnm{Munaf{\`o}},~\bfnm{Marcus~R}\binits{M.~R.}},
  \bauthor{\bsnm{Neijenhuijs},~\bfnm{Koen~I}\binits{K.~I.}},
  \bauthor{\bsnm{Nilsonne},~\bfnm{Gustav}\binits{G.}},
  \bauthor{\bsnm{Nosek},~\bfnm{Brian~A}\binits{B.~A.}},
  \bauthor{\bsnm{Plessow},~\bfnm{Franziska}\binits{F.}},
  \bauthor{\bsnm{Prenoveau},~\bfnm{Jason~M}\binits{J.~M.}},
  \bauthor{\bsnm{Ricker},~\bfnm{Ashley~A}\binits{A.~A.}},
  \bauthor{\bsnm{Schmidt},~\bfnm{Kathleen}\binits{K.}},
  \bauthor{\bsnm{Spies},~\bfnm{Jeffrey~R}\binits{J.~R.}},
  \bauthor{\bsnm{Stieger},~\bfnm{Stefan}\binits{S.}},
  \bauthor{\bsnm{Strohminger},~\bfnm{Nina}\binits{N.}},
  \bauthor{\bsnm{Sullivan},~\bfnm{Gavin~B}\binits{G.~B.}},
  \bauthor{\bparticle{van} \bsnm{Aert},~\bfnm{Robbie C~M}\binits{R.~C.~M.}},
  \bauthor{\bparticle{van} \bsnm{Assen},~\bfnm{Marcel A L~M}\binits{M.~A.
  L.~M.}}, \bauthor{\bsnm{Vanpaemel},~\bfnm{Wolf}\binits{W.}},
  \bauthor{\bsnm{Vianello},~\bfnm{Michelangelo}\binits{M.}},
  \bauthor{\bsnm{Voracek},~\bfnm{Martin}\binits{M.}} \AND
  \bauthor{\bsnm{Zuni},~\bfnm{Kellylynn}\binits{K.}}
(\byear{2016}).
\btitle{{Response to Comment on {\textquotedblleft}Estimating the
  reproducibility of psychological science{\textquotedblright}}}.
\bjournal{Science}
\bvolume{351}
\bpages{1037c}.
\end{barticle}
\endbibitem

\bibitem[\protect\citeauthoryear{Andrews and Kasy}{2018}]{Andrews:2018vh}
\begin{barticle}[author]
\bauthor{\bsnm{Andrews},~\bfnm{Isaiah}\binits{I.}} \AND
  \bauthor{\bsnm{Kasy},~\bfnm{Maximilian}\binits{M.}}
(\byear{2018}).
\btitle{{Identification of and correction for publication bias}}.
\bjournal{GitHub}
\bpages{1--85}.
\end{barticle}
\endbibitem

\bibitem[\protect\citeauthoryear{Baker}{2015}]{Baker:2015kd}
\begin{bmisc}[author]
\bauthor{\bsnm{Baker},~\bfnm{Monya}\binits{M.}}
(\byear{2015}).
\btitle{{Over half of psychology studies fail reproducibility test}}.
\end{bmisc}
\endbibitem

\bibitem[\protect\citeauthoryear{Barrett}{2015}]{Barrett:2015vl}
\begin{barticle}[author]
\bauthor{\bsnm{Barrett},~\bfnm{Lisa~Feldman}\binits{L.~F.}}
(\byear{2015}).
\btitle{{Psychology Is Not in Crisis}}.
\bjournal{The New York Times}
\bpages{A23}.
\end{barticle}
\endbibitem

\bibitem[\protect\citeauthoryear{Benjamin et~al.}{2018}]{Benjamin:2018gh}
\begin{barticle}[author]
\bauthor{\bsnm{Benjamin},~\bfnm{Daniel~J}\binits{D.~J.}},
  \bauthor{\bsnm{Berger},~\bfnm{James~O}\binits{J.~O.}},
  \bauthor{\bsnm{Johannesson},~\bfnm{Magnus}\binits{M.}},
  \bauthor{\bsnm{Nosek},~\bfnm{Brian~A}\binits{B.~A.}},
  \bauthor{\bsnm{Wagenmakers},~\bfnm{Eric-Jan}\binits{E.-J.}},
  \bauthor{\bsnm{Berk},~\bfnm{Richard~A}\binits{R.~A.}},
  \bauthor{\bsnm{Bollen},~\bfnm{Kenneth~A}\binits{K.~A.}},
  \bauthor{\bsnm{Brembs},~\bfnm{Bj{\"o}rn}\binits{B.}},
  \bauthor{\bsnm{Brown},~\bfnm{Lawrence}\binits{L.}},
  \bauthor{\bsnm{Camerer},~\bfnm{Colin~F}\binits{C.~F.}},
  \bauthor{\bsnm{Cesarini},~\bfnm{David}\binits{D.}},
  \bauthor{\bsnm{Chambers},~\bfnm{Christopher~D}\binits{C.~D.}},
  \bauthor{\bsnm{Clyde},~\bfnm{Merlise}\binits{M.}},
  \bauthor{\bsnm{Cook},~\bfnm{Thomas~D}\binits{T.~D.}},
  \bauthor{\bsnm{De~Boeck},~\bfnm{Paul}\binits{P.}},
  \bauthor{\bsnm{Dienes},~\bfnm{Zoltan}\binits{Z.}},
  \bauthor{\bsnm{Dreber},~\bfnm{Anna}\binits{A.}},
  \bauthor{\bsnm{Easwaran},~\bfnm{Kenny}\binits{K.}},
  \bauthor{\bsnm{Efferson},~\bfnm{Charles}\binits{C.}},
  \bauthor{\bsnm{Fehr},~\bfnm{Ernst}\binits{E.}},
  \bauthor{\bsnm{Fidler},~\bfnm{Fiona}\binits{F.}},
  \bauthor{\bsnm{Field},~\bfnm{Andy~P}\binits{A.~P.}},
  \bauthor{\bsnm{Forster},~\bfnm{Malcolm}\binits{M.}},
  \bauthor{\bsnm{George},~\bfnm{Edward~I}\binits{E.~I.}},
  \bauthor{\bsnm{Gonzalez},~\bfnm{Richard}\binits{R.}},
  \bauthor{\bsnm{Goodman},~\bfnm{Steven~N}\binits{S.~N.}},
  \bauthor{\bsnm{Green},~\bfnm{Edwin}\binits{E.}},
  \bauthor{\bsnm{Green},~\bfnm{Donald~P}\binits{D.~P.}},
  \bauthor{\bsnm{Greenwald},~\bfnm{Anthony~G}\binits{A.~G.}},
  \bauthor{\bsnm{Hadfield},~\bfnm{Jarrod~D}\binits{J.~D.}},
  \bauthor{\bsnm{Hedges},~\bfnm{Larry~V}\binits{L.~V.}},
  \bauthor{\bsnm{Held},~\bfnm{Leonhard}\binits{L.}},
  \bauthor{\bsnm{Ho},~\bfnm{Teck-Hua}\binits{T.-H.}},
  \bauthor{\bsnm{Hoijtink},~\bfnm{Herbert}\binits{H.}},
  \bauthor{\bsnm{Hruschka},~\bfnm{Daniel~J}\binits{D.~J.}},
  \bauthor{\bsnm{Imai},~\bfnm{Kosuke}\binits{K.}},
  \bauthor{\bsnm{Imbens},~\bfnm{Guido~W}\binits{G.~W.}},
  \bauthor{\bsnm{Ioannidis},~\bfnm{John P~A}\binits{J.~P.~A.}},
  \bauthor{\bsnm{Jeon},~\bfnm{Minjeong}\binits{M.}},
  \bauthor{\bsnm{Jones},~\bfnm{James~Holland}\binits{J.~H.}},
  \bauthor{\bsnm{Kirchler},~\bfnm{Michael}\binits{M.}},
  \bauthor{\bsnm{Laibson},~\bfnm{David}\binits{D.}},
  \bauthor{\bsnm{List},~\bfnm{John}\binits{J.}},
  \bauthor{\bsnm{Little},~\bfnm{Roderick}\binits{R.}},
  \bauthor{\bsnm{Lupia},~\bfnm{Arthur}\binits{A.}},
  \bauthor{\bsnm{Machery},~\bfnm{Edouard}\binits{E.}},
  \bauthor{\bsnm{Maxwell},~\bfnm{Scott~E}\binits{S.~E.}},
  \bauthor{\bsnm{McCarthy},~\bfnm{Michael}\binits{M.}},
  \bauthor{\bsnm{Moore},~\bfnm{Don~A}\binits{D.~A.}},
  \bauthor{\bsnm{Morgan},~\bfnm{Stephen~L}\binits{S.~L.}},
  \bauthor{\bsnm{Munaf{\`o}},~\bfnm{Marcus~R}\binits{M.~R.}},
  \bauthor{\bsnm{Nakagawa},~\bfnm{Shinichi}\binits{S.}},
  \bauthor{\bsnm{Nyhan},~\bfnm{Brendan}\binits{B.}},
  \bauthor{\bsnm{Parker},~\bfnm{Timothy~H}\binits{T.~H.}},
  \bauthor{\bsnm{Pericchi},~\bfnm{Luis}\binits{L.}},
  \bauthor{\bsnm{Perugini},~\bfnm{Marco}\binits{M.}},
  \bauthor{\bsnm{Rouder},~\bfnm{Jeff}\binits{J.}},
  \bauthor{\bsnm{Rousseau},~\bfnm{Judith}\binits{J.}},
  \bauthor{\bsnm{Savalei},~\bfnm{Victoria}\binits{V.}},
  \bauthor{\bsnm{Sch{\"o}nbrodt},~\bfnm{Felix~D}\binits{F.~D.}},
  \bauthor{\bsnm{Sellke},~\bfnm{Thomas}\binits{T.}},
  \bauthor{\bsnm{Sinclair},~\bfnm{Betsy}\binits{B.}},
  \bauthor{\bsnm{Tingley},~\bfnm{Dustin}\binits{D.}},
  \bauthor{\bsnm{Van~Zandt},~\bfnm{Trisha}\binits{T.}},
  \bauthor{\bsnm{Vazire},~\bfnm{Simine}\binits{S.}},
  \bauthor{\bsnm{Watts},~\bfnm{Duncan~J}\binits{D.~J.}},
  \bauthor{\bsnm{Winship},~\bfnm{Christopher}\binits{C.}},
  \bauthor{\bsnm{Wolpert},~\bfnm{Robert~L}\binits{R.~L.}},
  \bauthor{\bsnm{Xie},~\bfnm{Yu}\binits{Y.}},
  \bauthor{\bsnm{Young},~\bfnm{Cristobal}\binits{C.}},
  \bauthor{\bsnm{Zinman},~\bfnm{Jonathan}\binits{J.}} \AND
  \bauthor{\bsnm{Johnson},~\bfnm{Valen~E}\binits{V.~E.}}
(\byear{2018}).
\btitle{{Redefine statistical significance}}.
\bjournal{Nature Human Behaviour}
\bvolume{2}
\bpages{6--10}.
\end{barticle}
\endbibitem

\bibitem[\protect\citeauthoryear{Benjamini and Heller}{2008}]{Benjamini:2008kj}
\begin{barticle}[author]
\bauthor{\bsnm{Benjamini},~\bfnm{Yoav}\binits{Y.}} \AND
  \bauthor{\bsnm{Heller},~\bfnm{Ruth}\binits{R.}}
(\byear{2008}).
\btitle{{Screening for partial conjunction hypotheses}}.
\bjournal{Biometrics}
\bvolume{64}
\bpages{1215--1222}.
\end{barticle}
\endbibitem

\bibitem[\protect\citeauthoryear{Benjamini and
  Hochberg}{1995}]{Benjamini:1995cd}
\begin{barticle}[author]
\bauthor{\bsnm{Benjamini},~\bfnm{Yoav}\binits{Y.}} \AND
  \bauthor{\bsnm{Hochberg},~\bfnm{Yosef}\binits{Y.}}
(\byear{1995}).
\btitle{{Controlling the false discovery rate: a practical and powerful
  approach to multiple testing}}.
\bjournal{Journal of the Royal Statistical Society Series B (Statistical
  Methodology)}
\bvolume{57}
\bpages{289--300}.
\end{barticle}
\endbibitem

\bibitem[\protect\citeauthoryear{Benjamini and
  Hochberg}{2000}]{Benjamini:2000ka}
\begin{barticle}[author]
\bauthor{\bsnm{Benjamini},~\bfnm{Yoav}\binits{Y.}} \AND
  \bauthor{\bsnm{Hochberg},~\bfnm{Yosef}\binits{Y.}}
(\byear{2000}).
\btitle{{On the adaptive control of the false discovery rate in multiple
  testing with independent statistics}}.
\bjournal{Journal of Educational and Behavioral Statistics}
\bvolume{25}
\bpages{60--83}.
\end{barticle}
\endbibitem

\bibitem[\protect\citeauthoryear{Benjamini and
  Yekutieli}{2005}]{Benjamini:2005bv}
\begin{barticle}[author]
\bauthor{\bsnm{Benjamini},~\bfnm{Yoav}\binits{Y.}} \AND
  \bauthor{\bsnm{Yekutieli},~\bfnm{Daniel}\binits{D.}}
(\byear{2005}).
\btitle{{False Discovery Rate-Adjusted Multiple Confidence Intervals for
  Selected Parameters}}.
\bjournal{Journal of the American Statistical Association}
\bvolume{100}
\bpages{71--81}.
\end{barticle}
\endbibitem

\bibitem[\protect\citeauthoryear{Camerer et~al.}{2016}]{Camerer:2016cx}
\begin{barticle}[author]
\bauthor{\bsnm{Camerer},~\bfnm{Colin~F}\binits{C.~F.}},
  \bauthor{\bsnm{Dreber},~\bfnm{Anna}\binits{A.}},
  \bauthor{\bsnm{Forsell},~\bfnm{Eskil}\binits{E.}},
  \bauthor{\bsnm{Ho},~\bfnm{Teck-Hua}\binits{T.-H.}},
  \bauthor{\bsnm{Huber},~\bfnm{J{\"u}rgen}\binits{J.}},
  \bauthor{\bsnm{Johannesson},~\bfnm{Magnus}\binits{M.}},
  \bauthor{\bsnm{Kirchler},~\bfnm{Michael}\binits{M.}},
  \bauthor{\bsnm{Almenberg},~\bfnm{Johan}\binits{J.}},
  \bauthor{\bsnm{Altmejd},~\bfnm{Adam}\binits{A.}},
  \bauthor{\bsnm{Chan},~\bfnm{Taizan}\binits{T.}},
  \bauthor{\bsnm{Heikensten},~\bfnm{Emma}\binits{E.}},
  \bauthor{\bsnm{Holzmeister},~\bfnm{Felix}\binits{F.}},
  \bauthor{\bsnm{Imai},~\bfnm{Taisuke}\binits{T.}},
  \bauthor{\bsnm{Isaksson},~\bfnm{Siri}\binits{S.}},
  \bauthor{\bsnm{Nave},~\bfnm{Gideon}\binits{G.}},
  \bauthor{\bsnm{Pfeiffer},~\bfnm{Thomas}\binits{T.}},
  \bauthor{\bsnm{Razen},~\bfnm{Michael}\binits{M.}} \AND
  \bauthor{\bsnm{Wu},~\bfnm{Hang}\binits{H.}}
(\byear{2016}).
\btitle{{Evaluating replicability of laboratory experiments in economics}}.
\bjournal{Science}
\bvolume{351}
\bpages{1433--1436}.
\end{barticle}
\endbibitem

\bibitem[\protect\citeauthoryear{Camerer et~al.}{2018}]{Camerer:2018de}
\begin{barticle}[author]
\bauthor{\bsnm{Camerer},~\bfnm{Colin~F}\binits{C.~F.}},
  \bauthor{\bsnm{Dreber},~\bfnm{Anna}\binits{A.}},
  \bauthor{\bsnm{Holzmeister},~\bfnm{Felix}\binits{F.}},
  \bauthor{\bsnm{Ho},~\bfnm{Teck-Hua}\binits{T.-H.}},
  \bauthor{\bsnm{Huber},~\bfnm{J{\"u}rgen}\binits{J.}},
  \bauthor{\bsnm{Johannesson},~\bfnm{Magnus}\binits{M.}},
  \bauthor{\bsnm{Kirchler},~\bfnm{Michael}\binits{M.}},
  \bauthor{\bsnm{Nave},~\bfnm{Gideon}\binits{G.}},
  \bauthor{\bsnm{Nosek},~\bfnm{Brian~A}\binits{B.~A.}},
  \bauthor{\bsnm{Pfeiffer},~\bfnm{Thomas}\binits{T.}},
  \bauthor{\bsnm{Altmejd},~\bfnm{Adam}\binits{A.}},
  \bauthor{\bsnm{Buttrick},~\bfnm{Nick}\binits{N.}},
  \bauthor{\bsnm{Chan},~\bfnm{Taizan}\binits{T.}},
  \bauthor{\bsnm{Chen},~\bfnm{Yiling}\binits{Y.}},
  \bauthor{\bsnm{Forsell},~\bfnm{Eskil}\binits{E.}},
  \bauthor{\bsnm{Gampa},~\bfnm{Anup}\binits{A.}},
  \bauthor{\bsnm{Heikensten},~\bfnm{Emma}\binits{E.}},
  \bauthor{\bsnm{Hummer},~\bfnm{Lily}\binits{L.}},
  \bauthor{\bsnm{Imai},~\bfnm{Taisuke}\binits{T.}},
  \bauthor{\bsnm{Isaksson},~\bfnm{Siri}\binits{S.}},
  \bauthor{\bsnm{Manfredi},~\bfnm{Dylan}\binits{D.}},
  \bauthor{\bsnm{Rose},~\bfnm{Julia}\binits{J.}},
  \bauthor{\bsnm{Wagenmakers},~\bfnm{Eric-Jan}\binits{E.-J.}} \AND
  \bauthor{\bsnm{Wu},~\bfnm{Hang}\binits{H.}}
(\byear{2018}).
\btitle{{Evaluating the replicability of social science experiments in
  \emph{Nature} and \emph{Science} between 2010 and 2015}}.
\bjournal{Nature Human Behaviour}
\bvolume{343}
\bpages{229--268}.
\end{barticle}
\endbibitem

\bibitem[\protect\citeauthoryear{Carey}{2015}]{Carey:2015wp}
\begin{barticle}[author]
\bauthor{\bsnm{Carey},~\bfnm{Benedict}\binits{B.}}
(\byear{2015}).
\btitle{{Many psychology findings not as strong as claimed, study says}}.
\bjournal{The New York Times}
\bpages{A1}.
\end{barticle}
\endbibitem

\bibitem[\protect\citeauthoryear{{Open Science
  Collaboration}}{2015}]{OpenScienceCollaboration:2015cn}
\begin{barticle}[author]
\bauthor{\bsnm{{Open Science Collaboration}}}
(\byear{2015}).
\btitle{{Estimating the reproducibility of psychological science}}.
\bjournal{Science}
\bvolume{349}
\bpages{943}.
\end{barticle}
\endbibitem

\bibitem[\protect\citeauthoryear{Dodson, Darragh and
  Williams}{2008}]{Dodson:2008ks}
\begin{barticle}[author]
\bauthor{\bsnm{Dodson},~\bfnm{Chad~S}\binits{C.~S.}},
  \bauthor{\bsnm{Darragh},~\bfnm{James}\binits{J.}} \AND
  \bauthor{\bsnm{Williams},~\bfnm{Allison}\binits{A.}}
(\byear{2008}).
\btitle{{Stereotypes and retrieval-provoked illusory source recollections}}.
\bjournal{Journal of Experimental Psychology: Learning, Memory, and Cognition}
\bvolume{34}
\bpages{460--477}.
\end{barticle}
\endbibitem

\bibitem[\protect\citeauthoryear{Duval and Tweedie}{2000}]{Duval:2000dg}
\begin{barticle}[author]
\bauthor{\bsnm{Duval},~\bfnm{Sue}\binits{S.}} \AND
  \bauthor{\bsnm{Tweedie},~\bfnm{Richard}\binits{R.}}
(\byear{2000}).
\btitle{{Trim and Fill: A Simple Funnel-Plot{\textendash}Based Method of
  Testing and Adjusting for Publication Bias in Meta-Analysis}}.
\bjournal{Biometrics}
\bvolume{56}
\bpages{455--463}.
\end{barticle}
\endbibitem

\bibitem[\protect\citeauthoryear{{The Economist}}{2016}]{Anonymous:uoQKTVTm}
\begin{barticle}[author]
\bauthor{\bsnm{{The Economist}}}
(\byear{2016}).
\btitle{{The scientific method}}.
\bjournal{The Economist}.
\end{barticle}
\endbibitem

\bibitem[\protect\citeauthoryear{Etz and Vandekerckhove}{2016}]{Etz:2016gx}
\begin{barticle}[author]
\bauthor{\bsnm{Etz},~\bfnm{Alexander}\binits{A.}} \AND
  \bauthor{\bsnm{Vandekerckhove},~\bfnm{Joachim}\binits{J.}}
(\byear{2016}).
\btitle{{A Bayesian Perspective on the Reproducibility Project: Psychology}}.
\bjournal{PLOS ONE}
\bvolume{11}
\bpages{e0149794--12}.
\end{barticle}
\endbibitem

\bibitem[\protect\citeauthoryear{Farris et~al.}{2008}]{Farris:2008ev}
\begin{barticle}[author]
\bauthor{\bsnm{Farris},~\bfnm{Coreen}\binits{C.}},
  \bauthor{\bsnm{Treat},~\bfnm{Teresa~A}\binits{T.~A.}},
  \bauthor{\bsnm{Viken},~\bfnm{Richard~J}\binits{R.~J.}} \AND
  \bauthor{\bsnm{McFall},~\bfnm{Richard~M}\binits{R.~M.}}
(\byear{2008}).
\btitle{{Perceptual mechanisms that characterize gender differences in decoding
  women's sexual intent}}.
\bjournal{Psychological Science}
\bvolume{19}
\bpages{348--354}.
\end{barticle}
\endbibitem

\bibitem[\protect\citeauthoryear{Fisher}{1921}]{Fisher:1921vq}
\begin{barticle}[author]
\bauthor{\bsnm{Fisher},~\bfnm{Ronald~Aylmer}\binits{R.~A.}}
(\byear{1921}).
\btitle{{On the 'probable error' of a coefficient of correlation deduced from a
  small sample}}.
\bjournal{Metron}
\bvolume{1}
\bpages{3--32}.
\end{barticle}
\endbibitem

\bibitem[\protect\citeauthoryear{Fisher}{1924}]{Fisher:1924ve}
\begin{barticle}[author]
\bauthor{\bsnm{Fisher},~\bfnm{Ronald~Aylmer}\binits{R.~A.}}
(\byear{1924}).
\btitle{{The distribution of the partial correlation coefficient}}.
\bjournal{Metron}
\bvolume{3}
\bpages{329--332}.
\end{barticle}
\endbibitem

\bibitem[\protect\citeauthoryear{Fithian, Sun and
  Taylor}{2014}]{Fithian:2014ws}
\begin{barticle}[author]
\bauthor{\bsnm{Fithian},~\bfnm{William}\binits{W.}},
  \bauthor{\bsnm{Sun},~\bfnm{Dennis~L}\binits{D.~L.}} \AND
  \bauthor{\bsnm{Taylor},~\bfnm{Jonathan~E}\binits{J.~E.}}
(\byear{2014}).
\btitle{{Optimal Inference After Model Selection}}.
\bjournal{arXiv}.
\end{barticle}
\endbibitem

\bibitem[\protect\citeauthoryear{Gelman and Carlin}{2014}]{Gelman:2014cd}
\begin{barticle}[author]
\bauthor{\bsnm{Gelman},~\bfnm{Andrew}\binits{A.}} \AND
  \bauthor{\bsnm{Carlin},~\bfnm{John}\binits{J.}}
(\byear{2014}).
\btitle{{Beyond Power Calculations}}.
\bjournal{Perspectives on Psychological Science}
\bvolume{9}
\bpages{641--651}.
\end{barticle}
\endbibitem

\bibitem[\protect\citeauthoryear{Gelman and O'Rourke}{2013}]{Gelman:2013ep}
\begin{barticle}[author]
\bauthor{\bsnm{Gelman},~\bfnm{Andrew}\binits{A.}} \AND
  \bauthor{\bsnm{O'Rourke},~\bfnm{Keith}\binits{K.}}
(\byear{2013}).
\btitle{{Discussion: Difficulties in making inferences about scientific truth
  from distributions of published p-values}}.
\bjournal{Biostatistics}
\bvolume{15}
\bpages{18--23}.
\end{barticle}
\endbibitem

\bibitem[\protect\citeauthoryear{Gelman and Tuerlinckx}{2000}]{Gelman:2000tg}
\begin{barticle}[author]
\bauthor{\bsnm{Gelman},~\bfnm{Andrew}\binits{A.}} \AND
  \bauthor{\bsnm{Tuerlinckx},~\bfnm{Francis}\binits{F.}}
(\byear{2000}).
\btitle{{Type S error rates for classical and Bayesian single and multiple
  comparison procedures}}.
\bjournal{Computational Statistics}
\bvolume{15}
\bpages{373--390}.
\end{barticle}
\endbibitem

\bibitem[\protect\citeauthoryear{Gilbert et~al.}{2016a}]{Gilbert:2016he}
\begin{barticle}[author]
\bauthor{\bsnm{Gilbert},~\bfnm{Daniel~T}\binits{D.~T.}},
  \bauthor{\bsnm{King},~\bfnm{Gary}\binits{G.}},
  \bauthor{\bsnm{Pettigrew},~\bfnm{Stephen}\binits{S.}} \AND
  \bauthor{\bsnm{Wilson},~\bfnm{Timothy~D}\binits{T.~D.}}
(\byear{2016}a).
\btitle{{Comment on {\textquotedblleft}Estimating the reproducibility of
  psychological science{\textquotedblright}}}.
\bjournal{Science}
\bvolume{351}
\bpages{1037a}.
\end{barticle}
\endbibitem

\bibitem[\protect\citeauthoryear{Gilbert et~al.}{2016b}]{Gilbert:2016uv}
\begin{bmisc}[author]
\bauthor{\bsnm{Gilbert},~\bfnm{Daniel~T}\binits{D.~T.}},
  \bauthor{\bsnm{King},~\bfnm{Gary}\binits{G.}},
  \bauthor{\bsnm{Pettigrew},~\bfnm{Stephen}\binits{S.}} \AND
  \bauthor{\bsnm{Wilson},~\bfnm{Timothy~D}\binits{T.~D.}}
(\byear{2016}b).
\btitle{{A Response to the Reply to Our Technical Comment on
  {\textquotedblleft}Estimating the Reproducibility of Psychological
  Science{\textquotedblright} }}.
\end{bmisc}
\endbibitem

\bibitem[\protect\citeauthoryear{Gilbert et~al.}{2016c}]{Gilbert:2016th}
\begin{bmisc}[author]
\bauthor{\bsnm{Gilbert},~\bfnm{Daniel~T}\binits{D.~T.}},
  \bauthor{\bsnm{King},~\bfnm{Gary}\binits{G.}},
  \bauthor{\bsnm{Pettigrew},~\bfnm{Stephen}\binits{S.}} \AND
  \bauthor{\bsnm{Wilson},~\bfnm{Timothy~D}\binits{T.~D.}}
(\byear{2016}c).
\btitle{{More on {\textquotedblleft}Estimating the Reproducibility of
  Psychological Science{\textquotedblright}}}.
\end{bmisc}
\endbibitem

\bibitem[\protect\citeauthoryear{Goodman}{2013}]{Goodman:2013kj}
\begin{barticle}[author]
\bauthor{\bsnm{Goodman},~\bfnm{Steven~N}\binits{S.~N.}}
(\byear{2013}).
\btitle{{Discussion: An estimate of the science-wise false discovery rate and
  application to the top medical literature}}.
\bjournal{Biostatistics}
\bvolume{15}
\bpages{23--27}.
\end{barticle}
\endbibitem

\bibitem[\protect\citeauthoryear{Goodman, Fanelli and
  Ioannidis}{2016}]{Goodman:2016bo}
\begin{barticle}[author]
\bauthor{\bsnm{Goodman},~\bfnm{Steven~N}\binits{S.~N.}},
  \bauthor{\bsnm{Fanelli},~\bfnm{Daniele}\binits{D.}} \AND
  \bauthor{\bsnm{Ioannidis},~\bfnm{John P~A}\binits{J.~P.~A.}}
(\byear{2016}).
\btitle{{What does research reproducibility mean?}}
\bjournal{Science Translational Medicine}
\bvolume{8}
\bpages{341ps12}.
\end{barticle}
\endbibitem

\bibitem[\protect\citeauthoryear{Hedges}{1992}]{Hedges:1992eb}
\begin{barticle}[author]
\bauthor{\bsnm{Hedges},~\bfnm{Larry~V}\binits{L.~V.}}
(\byear{1992}).
\btitle{{Modeling publication selection effects in meta-analysis}}.
\bjournal{Statistical Science}
\bvolume{7}
\bpages{246--255}.
\end{barticle}
\endbibitem

\bibitem[\protect\citeauthoryear{Heller et~al.}{2007}]{Heller:2007}
\begin{barticle}[author]
\bauthor{\bsnm{Heller},~\bfnm{Ruth}\binits{R.}},
  \bauthor{\bsnm{Golland},~\bfnm{Yulia}\binits{Y.}},
  \bauthor{\bsnm{Malach},~\bfnm{Rafael}\binits{R.}} \AND
  \bauthor{\bsnm{Benjamini},~\bfnm{Yoav}\binits{Y.}}
(\byear{2007}).
\btitle{Conjunction group analysis: an alternative to mixed/random effect
  analysis}.
\bjournal{Neuroimage}
\bvolume{37}
\bpages{1178--1185}.
\end{barticle}
\endbibitem

\bibitem[\protect\citeauthoryear{Holm}{1979}]{Holm:1979hl}
\begin{barticle}[author]
\bauthor{\bsnm{Holm},~\bfnm{Sture}\binits{S.}}
(\byear{1979}).
\btitle{{A simple sequentially rejective multiple test procedure}}.
\bjournal{Scandinavian Journal of Statistics}
\bvolume{6}
\bpages{65--70}.
\end{barticle}
\endbibitem

\bibitem[\protect\citeauthoryear{Ioannidis}{2013}]{Ioannidis:2013fz}
\begin{barticle}[author]
\bauthor{\bsnm{Ioannidis},~\bfnm{John P~A}\binits{J.~P.~A.}}
(\byear{2013}).
\btitle{{Discussion: Why "An estimate of the science-wise false discovery rate
  and application to the top medical literature" is false}}.
\bjournal{Biostatistics}
\bvolume{15}
\bpages{28--36}.
\end{barticle}
\endbibitem

\bibitem[\protect\citeauthoryear{Jager and Leek}{2013}]{Jager:2013dq}
\begin{barticle}[author]
\bauthor{\bsnm{Jager},~\bfnm{Leah~R}\binits{L.~R.}} \AND
  \bauthor{\bsnm{Leek},~\bfnm{Jeffrey~T}\binits{J.~T.}}
(\byear{2013}).
\btitle{{An estimate of the science-wise false discovery rate and application
  to the top medical literature}}.
\bjournal{Biostatistics}
\bvolume{15}
\bpages{1--12}.
\end{barticle}
\endbibitem

\bibitem[\protect\citeauthoryear{Johnson et~al.}{2017}]{Johnson:2017gu}
\begin{barticle}[author]
\bauthor{\bsnm{Johnson},~\bfnm{Valen~E}\binits{V.~E.}},
  \bauthor{\bsnm{Payne},~\bfnm{Richard~D}\binits{R.~D.}},
  \bauthor{\bsnm{Wang},~\bfnm{Tianying}\binits{T.}},
  \bauthor{\bsnm{Asher},~\bfnm{Alex}\binits{A.}} \AND
  \bauthor{\bsnm{Mandal},~\bfnm{Soutrik}\binits{S.}}
(\byear{2017}).
\btitle{{On the Reproducibility of Psychological Science}}.
\bjournal{Journal of the American Statistical Association}
\bvolume{112}
\bpages{1--10}.
\end{barticle}
\endbibitem

\bibitem[\protect\citeauthoryear{Klein et~al.}{2018}]{Klein:2018}
\begin{bunpublished}[author]
\bauthor{\bsnm{Klein},~\bfnm{Richard~A}\binits{R.~A.}},
  \bauthor{\bsnm{Vianello},~\bfnm{Michelangelo}\binits{M.}},
  \bauthor{\bsnm{Hasselman},~\bfnm{Fred}\binits{F.}},
  \bauthor{\bsnm{Adams},~\bfnm{Byron~G}\binits{B.~G.}},
  \bauthor{\bsnm{Adams},~\bfnm{Reginald~B}\binits{R.~B.}},
  \bauthor{\bsnm{Alper},~\bfnm{Sinan}\binits{S.}},
  \bauthor{\bsnm{Aveyard},~\bfnm{Mark}\binits{M.}},
  \bauthor{\bsnm{Axt},~\bfnm{Jordan~R}\binits{J.~R.}},
  \bauthor{\bsnm{Bahn{\'{i}}k},~\bfnm{{\v{S}}t{\v{e}}p{\'{a}}n}\binits{{\v{S}}.}},
  \bauthor{\bsnm{Batra},~\bfnm{Rishtee}\binits{R.}},
  \bauthor{\bsnm{Berkics},~\bfnm{Mih{\'{a}}ly}\binits{M.}},
  \bauthor{\bsnm{Bernstein},~\bfnm{Michael~J}\binits{M.~J.}},
  \bauthor{\bsnm{Berry},~\bfnm{Daniel}\binits{D.}},
  \bauthor{\bsnm{Bialobrzeska},~\bfnm{Olga}\binits{O.}},
  \bauthor{\bsnm{Binan},~\bfnm{Evans}\binits{E.}},
  \bauthor{\bsnm{Bocian},~\bfnm{Konrad}\binits{K.}},
  \bauthor{\bsnm{Brandt},~\bfnm{Mark~J}\binits{M.~J.}},
  \bauthor{\bsnm{Busching},~\bfnm{Robert}\binits{R.}},
  \bauthor{\bsnm{R{\'{e}}dei},~\bfnm{Anna~Cabak}\binits{A.~C.}},
  \bauthor{\bsnm{Cai},~\bfnm{Huajian}\binits{H.}},
  \bauthor{\bsnm{Cambier},~\bfnm{Fanny}\binits{F.}},
  \bauthor{\bsnm{Cantarero},~\bfnm{Katarzyna}\binits{K.}},
  \bauthor{\bsnm{Carmichael},~\bfnm{Cheryl~L}\binits{C.~L.}},
  \bauthor{\bsnm{Ceric},~\bfnm{Francisco}\binits{F.}},
  \bauthor{\bsnm{Cicero},~\bfnm{David~C}\binits{D.~C.}},
  \bauthor{\bsnm{Chandler},~\bfnm{Jesse}\binits{J.}},
  \bauthor{\bsnm{Chatard},~\bfnm{Armand}\binits{A.}},
  \bauthor{\bsnm{Chen},~\bfnm{Eva~E}\binits{E.~E.}},
  \bauthor{\bsnm{Chang},~\bfnm{Jen-Ho}\binits{J.-H.}},
  \bauthor{\bsnm{Cheong},~\bfnm{Winnee}\binits{W.}},
  \bauthor{\bsnm{Coen},~\bfnm{Sharon}\binits{S.}},
  \bauthor{\bsnm{Coleman},~\bfnm{Jennifer~A}\binits{J.~A.}},
  \bauthor{\bsnm{Collisson},~\bfnm{Brian}\binits{B.}},
  \bauthor{\bsnm{Conway},~\bfnm{Morgan~A}\binits{M.~A.}},
  \bauthor{\bsnm{Corker},~\bfnm{Katherine~S}\binits{K.~S.}},
  \bauthor{\bsnm{Curran},~\bfnm{Paul~G}\binits{P.~G.}},
  \bauthor{\bsnm{Cushman},~\bfnm{Fiery}\binits{F.}},
  \bauthor{\bsnm{Dagona},~\bfnm{Zubairu~K}\binits{Z.~K.}},
  \bauthor{\bsnm{Dalgar},~\bfnm{Ilker}\binits{I.}}, \bauthor{\bsnm{{Dalla
  Rosa}},~\bfnm{Anna}\binits{A.}},
  \bauthor{\bsnm{David},~\bfnm{William~E}\binits{W.~E.}},
  \bauthor{\bparticle{de} \bsnm{Brujin},~\bfnm{Maaike}\binits{M.}},
  \bauthor{\bsnm{{De Schutter}},~\bfnm{Leander}\binits{L.}},
  \bauthor{\bsnm{Devos},~\bfnm{Thierry}\binits{T.}},
  \bauthor{\bsnm{Doğulu},~\bfnm{Canay}\binits{C.}},
  \bauthor{\bsnm{Dozo},~\bfnm{Nerisa}\binits{N.}},
  \bauthor{\bsnm{Dukes},~\bfnm{Kristin~Nicole}\binits{K.~N.}},
  \bauthor{\bsnm{Dunham},~\bfnm{Yarrow}\binits{Y.}},
  \bauthor{\bsnm{Durrheim},~\bfnm{Kevin}\binits{K.}},
  \bauthor{\bsnm{Ebersole},~\bfnm{Charles~R}\binits{C.~R.}},
  \bauthor{\bsnm{Edlund},~\bfnm{John~E}\binits{J.~E.}},
  \bauthor{\bsnm{English},~\bfnm{Alexander~Scott}\binits{A.~S.}},
  \bauthor{\bsnm{Eller},~\bfnm{Anja}\binits{A.}},
  \bauthor{\bsnm{Finck},~\bfnm{Carolyn}\binits{C.}},
  \bauthor{\bsnm{Frankowska},~\bfnm{Natalia}\binits{N.}},
  \bauthor{\bsnm{Freyre},~\bfnm{Miguel-{\'{A}}ngel}\binits{M.-{\'{A}}.}},
  \bauthor{\bsnm{Friedman},~\bfnm{Mike}\binits{M.}},
  \bauthor{\bsnm{Galliani},~\bfnm{Elisa~Maria}\binits{E.~M.}},
  \bauthor{\bsnm{Gandi},~\bfnm{Joshua~C}\binits{J.~C.}},
  \bauthor{\bsnm{Ghoshal},~\bfnm{Tanuka}\binits{T.}},
  \bauthor{\bsnm{Giessner},~\bfnm{Steffen~R}\binits{S.~R.}},
  \bauthor{\bsnm{Gill},~\bfnm{Tripat}\binits{T.}},
  \bauthor{\bsnm{Gnambs},~\bfnm{Timo}\binits{T.}},
  \bauthor{\bsnm{G{\'{o}}mez},~\bfnm{{\'{A}}ngel}\binits{{\'{A}}.}},
  \bauthor{\bsnm{Gonz{\'{a}}lez},~\bfnm{Roberto}\binits{R.}},
  \bauthor{\bsnm{Graham},~\bfnm{Jesse}\binits{J.}},
  \bauthor{\bsnm{Grahe},~\bfnm{Jon~E}\binits{J.~E.}},
  \bauthor{\bsnm{Grahek},~\bfnm{Ivan}\binits{I.}},
  \bauthor{\bsnm{Green},~\bfnm{Eva G~T}\binits{E.~G.~T.}},
  \bauthor{\bsnm{Hai},~\bfnm{Kakul}\binits{K.}},
  \bauthor{\bsnm{Haigh},~\bfnm{Matthew}\binits{M.}},
  \bauthor{\bsnm{Haines},~\bfnm{Elizabeth~L}\binits{E.~L.}},
  \bauthor{\bsnm{Hall},~\bfnm{Michael~P}\binits{M.~P.}},
  \bauthor{\bsnm{Heffernan},~\bfnm{Marie~E}\binits{M.~E.}},
  \bauthor{\bsnm{Hicks},~\bfnm{Joshua~A}\binits{J.~A.}},
  \bauthor{\bsnm{Houdek},~\bfnm{Petr}\binits{P.}},
  \bauthor{\bsnm{Huntsinger},~\bfnm{Jeffrey~R}\binits{J.~R.}},
  \bauthor{\bsnm{Huynh},~\bfnm{Ho~Phi}\binits{H.~P.}},
  \bauthor{\bsnm{IJzerman},~\bfnm{Hans}\binits{H.}},
  \bauthor{\bsnm{Inbar},~\bfnm{Yoel}\binits{Y.}},
  \bauthor{\bsnm{Innes-Ker},~\bfnm{{\AA}se~H}\binits{{\AA}.~H.}},
  \bauthor{\bsnm{Jim{\'{e}}nez-Leal},~\bfnm{William}\binits{W.}},
  \bauthor{\bsnm{John},~\bfnm{Melissa-Sue}\binits{M.-S.}},
  \bauthor{\bsnm{Joy-Gaba},~\bfnm{Jennifer~A}\binits{J.~A.}},
  \bauthor{\bsnm{Kende},~\bfnm{Anna}\binits{A.}},
  \bauthor{\bsnm{Kamiloğlu},~\bfnm{Roza~G}\binits{R.~G.}},
  \bauthor{\bsnm{Kappes},~\bfnm{Heather~Barry}\binits{H.~B.}},
  \bauthor{\bsnm{Karabati},~\bfnm{Serdar}\binits{S.}},
  \bauthor{\bsnm{Karick},~\bfnm{Haruna}\binits{H.}},
  \bauthor{\bsnm{Keller},~\bfnm{Victor~N}\binits{V.~N.}},
  \bauthor{\bsnm{Kervyn},~\bfnm{Nicolas}\binits{N.}},
  \bauthor{\bsnm{Kne{\v{z}}evi{\'{c}}},~\bfnm{Goran}\binits{G.}},
  \bauthor{\bsnm{Kovacs},~\bfnm{Carrie}\binits{C.}},
  \bauthor{\bsnm{Krueger},~\bfnm{Lacy~E}\binits{L.~E.}},
  \bauthor{\bsnm{Kurapov},~\bfnm{German}\binits{G.}},
  \bauthor{\bsnm{Kurtz},~\bfnm{Jamie}\binits{J.}},
  \bauthor{\bsnm{Lakens},~\bfnm{Dani{\"{e}}l}\binits{D.}},
  \bauthor{\bsnm{Lazarevi{\'{c}}},~\bfnm{Ljiljana~B}\binits{L.~B.}},
  \bauthor{\bsnm{Levitan},~\bfnm{Carmel~A}\binits{C.~A.}}, \bauthor{\bsnm{{Neil
  A Lewis}},~\bfnm{Jr}\binits{J.}},
  \bauthor{\bsnm{Lins},~\bfnm{Samuel}\binits{S.}},
  \bauthor{\bsnm{Lipsey},~\bfnm{Nikolette~P}\binits{N.~P.}},
  \bauthor{\bsnm{Losee},~\bfnm{Joy}\binits{J.}},
  \bauthor{\bsnm{Maassen},~\bfnm{Esther}\binits{E.}},
  \bauthor{\bsnm{Maitner},~\bfnm{Angela~T}\binits{A.~T.}},
  \bauthor{\bsnm{Malingumu},~\bfnm{Winfrida}\binits{W.}},
  \bauthor{\bsnm{Mallett},~\bfnm{Robyn~K}\binits{R.~K.}},
  \bauthor{\bsnm{Marotta},~\bfnm{Saita~A}\binits{S.~A.}},
  \bauthor{\bsnm{Međedovi{\'{c}}},~\bfnm{Janko}\binits{J.}},
  \bauthor{\bsnm{{Mena Pacheco}},~\bfnm{Fernando}\binits{F.}},
  \bauthor{\bsnm{Milfont},~\bfnm{Taciano~L}\binits{T.~L.}},
  \bauthor{\bsnm{Morris},~\bfnm{Wendy~L}\binits{W.~L.}},
  \bauthor{\bsnm{Murphy},~\bfnm{Sean}\binits{S.}},
  \bauthor{\bsnm{Myachykov},~\bfnm{Andriy}\binits{A.}},
  \bauthor{\bsnm{Neave},~\bfnm{Nick}\binits{N.}},
  \bauthor{\bsnm{Neijenhuijs},~\bfnm{Koen}\binits{K.}},
  \bauthor{\bsnm{Nelson},~\bfnm{Anthony~J}\binits{A.~J.}},
  \bauthor{\bsnm{Neto},~\bfnm{F{\'{e}}lix}\binits{F.}},
  \bauthor{\bsnm{Nichols},~\bfnm{Austin~Lee}\binits{A.~L.}},
  \bauthor{\bsnm{Ocampo},~\bfnm{Aaron}\binits{A.}},
  \bauthor{\bsnm{ODonnell},~\bfnm{Susan~L}\binits{S.~L.}},
  \bauthor{\bsnm{Ong},~\bfnm{Elsie}\binits{E.}},
  \bauthor{\bsnm{Osowiecka},~\bfnm{Malgorzata}\binits{M.}},
  \bauthor{\bsnm{Orosz},~\bfnm{G{\'{a}}bor}\binits{G.}},
  \bauthor{\bsnm{Packard},~\bfnm{Grant}\binits{G.}},
  \bauthor{\bsnm{P{\'{e}}rez-S{\'{a}}nchez},~\bfnm{Rolando}\binits{R.}},
  \bauthor{\bsnm{Petrovi{\'{c}}},~\bfnm{Boban}\binits{B.}},
  \bauthor{\bsnm{Pilati},~\bfnm{Ronaldo}\binits{R.}},
  \bauthor{\bsnm{Pinter},~\bfnm{Brad}\binits{B.}},
  \bauthor{\bsnm{Podesta},~\bfnm{Lysandra}\binits{L.}},
  \bauthor{\bsnm{Pogge},~\bfnm{Gabrielle}\binits{G.}},
  \bauthor{\bsnm{Pollmann},~\bfnm{Monique M~H}\binits{M.~M.~H.}},
  \bauthor{\bsnm{Rutchick},~\bfnm{Abraham~M}\binits{A.~M.}},
  \bauthor{\bsnm{Saeri},~\bfnm{Alexander}\binits{A.}},
  \bauthor{\bsnm{Saavedra},~\bfnm{Patricio}\binits{P.}},
  \bauthor{\bsnm{Salomon},~\bfnm{Erika}\binits{E.}},
  \bauthor{\bsnm{Schmidt},~\bfnm{Kathleen}\binits{K.}},
  \bauthor{\bsnm{Sch{\"{o}}nbrodt},~\bfnm{Felix~D}\binits{F.~D.}},
  \bauthor{\bsnm{Sekerdej},~\bfnm{Maciej~B}\binits{M.~B.}},
  \bauthor{\bsnm{Sirlop{\'{u}}},~\bfnm{David}\binits{D.}},
  \bauthor{\bsnm{Skorinko},~\bfnm{Jeannie L~M}\binits{J.~L.~M.}},
  \bauthor{\bsnm{Smith},~\bfnm{Michael~A}\binits{M.~A.}},
  \bauthor{\bsnm{Smith-Castro},~\bfnm{Vanessa}\binits{V.}},
  \bauthor{\bsnm{Smolders},~\bfnm{Karin}\binits{K.}},
  \bauthor{\bsnm{Sobkow},~\bfnm{Agata}\binits{A.}},
  \bauthor{\bsnm{Sowden},~\bfnm{Walter}\binits{W.}},
  \bauthor{\bsnm{Srivastava},~\bfnm{Manini}\binits{M.}},
  \bauthor{\bsnm{Sundfelt},~\bfnm{Oskar~K}\binits{O.~K.}},
  \bauthor{\bsnm{Spachtholz},~\bfnm{Philipp}\binits{P.}},
  \bauthor{\bsnm{Steiner},~\bfnm{Troy~G}\binits{T.~G.}},
  \bauthor{\bsnm{Stouten},~\bfnm{Jeroen}\binits{J.}},
  \bauthor{\bsnm{Street},~\bfnm{Chris N~H}\binits{C.~N.~H.}},
  \bauthor{\bsnm{Szeto},~\bfnm{Stephanie}\binits{S.}},
  \bauthor{\bsnm{Szumowska},~\bfnm{Ewa}\binits{E.}},
  \bauthor{\bsnm{Tang},~\bfnm{Andrew}\binits{A.}},
  \bauthor{\bsnm{Tanzer},~\bfnm{Norbert}\binits{N.}},
  \bauthor{\bsnm{Tear},~\bfnm{Morgan}\binits{M.}},
  \bauthor{\bsnm{Thomae},~\bfnm{Manuela}\binits{M.}},
  \bauthor{\bsnm{Traczyk},~\bfnm{Jakub}\binits{J.}},
  \bauthor{\bsnm{Torres},~\bfnm{David}\binits{D.}},
  \bauthor{\bsnm{Theriault},~\bfnm{Jordan}\binits{J.}},
  \bauthor{\bsnm{Tybur},~\bfnm{Joshua~M}\binits{J.~M.}},
  \bauthor{\bsnm{Ujhelyi},~\bfnm{Adrienn}\binits{A.}}, \bauthor{\bparticle{van}
  \bsnm{Aert},~\bfnm{Robbie C~M}\binits{R.~C.~M.}}, \bauthor{\bparticle{van}
  \bsnm{Assen},~\bfnm{Marcel A L~M}\binits{M.~A. L.~M.}},
  \bauthor{\bparticle{van} \bsnm{Lange},~\bfnm{Paul A~M}\binits{P.~A.~M.}},
  \bauthor{\bparticle{van~der} \bsnm{Hulst},~\bfnm{Marije}\binits{M.}},
  \bauthor{\bsnm{{van 't Veer}},~\bfnm{Anna~Elisabeth}\binits{A.~E.}},
  \bauthor{\bsnm{{V{\'{a}}squez
  Echeverr{\'{i}}a}},~\bfnm{Alejandro}\binits{A.}},
  \bauthor{\bsnm{Vaughn},~\bfnm{Leigh~Ann}\binits{L.~A.}},
  \bauthor{\bsnm{V{\'{a}}squez},~\bfnm{Alexandra}\binits{A.}},
  \bauthor{\bsnm{Vega},~\bfnm{Luis~Diego}\binits{L.~D.}},
  \bauthor{\bsnm{Verniers},~\bfnm{Catherine}\binits{C.}},
  \bauthor{\bsnm{Verschoor},~\bfnm{Mark}\binits{M.}},
  \bauthor{\bsnm{Voermans},~\bfnm{Ingrid}\binits{I.}},
  \bauthor{\bsnm{Vranka},~\bfnm{Marek~A}\binits{M.~A.}},
  \bauthor{\bparticle{de} \bsnm{Vries},~\bfnm{Marieke}\binits{M.}},
  \bauthor{\bsnm{Welch},~\bfnm{Cheryl}\binits{C.}},
  \bauthor{\bsnm{Wichman},~\bfnm{Aaron}\binits{A.}},
  \bauthor{\bsnm{Williams},~\bfnm{Lisa~A}\binits{L.~A.}},
  \bauthor{\bsnm{Wood},~\bfnm{Michael}\binits{M.}},
  \bauthor{\bsnm{Woodzicka},~\bfnm{Julie~A}\binits{J.~A.}},
  \bauthor{\bsnm{Wronska},~\bfnm{Marta~K}\binits{M.~K.}},
  \bauthor{\bsnm{Young},~\bfnm{Liane}\binits{L.}},
  \bauthor{\bsnm{Zelenski},~\bfnm{John~M}\binits{J.~M.}},
  \bauthor{\bsnm{Zeng},~\bfnm{Zhijia}\binits{Z.}} \AND
  \bauthor{\bsnm{Nosek},~\bfnm{Brian~A}\binits{B.~A.}}
(\byear{2018}).
\btitle{{Many Labs 2: Investigating Variation in Replicability Across Sample
  and Setting}}.
\end{bunpublished}
\endbibitem

\bibitem[\protect\citeauthoryear{Larsen and McKibban}{2008}]{Larsen:2008tu}
\begin{barticle}[author]
\bauthor{\bsnm{Larsen},~\bfnm{Jeff~T}\binits{J.~T.}} \AND
  \bauthor{\bsnm{McKibban},~\bfnm{Amie~R}\binits{A.~R.}}
(\byear{2008}).
\btitle{{Is Happiness Having What You Want, Wanting What You Have, or Both?}}
\bjournal{Psychological Science}
\bvolume{19}
\bpages{371--377}.
\end{barticle}
\endbibitem

\bibitem[\protect\citeauthoryear{Lee et~al.}{2016}]{Lee:2016fv}
\begin{barticle}[author]
\bauthor{\bsnm{Lee},~\bfnm{Jason~D}\binits{J.~D.}},
  \bauthor{\bsnm{Sun},~\bfnm{Dennis~L}\binits{D.~L.}},
  \bauthor{\bsnm{Sun},~\bfnm{Yuekai}\binits{Y.}} \AND
  \bauthor{\bsnm{Taylor},~\bfnm{Jonathan~E}\binits{J.~E.}}
(\byear{2016}).
\btitle{{Exact post-selection inference, with application to the lasso}}.
\bjournal{The Annals of Statistics}
\bvolume{44}
\bpages{907--927}.
\end{barticle}
\endbibitem

\bibitem[\protect\citeauthoryear{Morey and Lakens}{2017}]{Morey:2017}
\begin{bunpublished}[author]
\bauthor{\bsnm{Morey},~\bfnm{Richard~D}\binits{R.~D.}} \AND
  \bauthor{\bsnm{Lakens},~\bfnm{Dani{\"{e}}l}\binits{D.}}
(\byear{2017}).
\btitle{{Why most of psychology is statistically unfalsifiable}}.
\bdoi{doi:10.1021/es060015i}
\end{bunpublished}
\endbibitem

\bibitem[\protect\citeauthoryear{Nosek and Errington}{2017}]{Nosek:2017ei}
\begin{barticle}[author]
\bauthor{\bsnm{Nosek},~\bfnm{Brian~A}\binits{B.~A.}} \AND
  \bauthor{\bsnm{Errington},~\bfnm{Timothy~M}\binits{T.~M.}}
(\byear{2017}).
\btitle{{Reproducibility in Cancer Biology: Making sense of replications}}.
\bjournal{eLife}
\bvolume{6}
\bpages{e23383}.
\end{barticle}
\endbibitem

\bibitem[\protect\citeauthoryear{Nosek and Gilbert}{2016}]{Nosek:2016}
\begin{bmisc}[author]
\bauthor{\bsnm{Nosek},~\bfnm{Brian~A}\binits{B.~A.}} \AND
  \bauthor{\bsnm{Gilbert},~\bfnm{Elizabeth}\binits{E.}}
(\byear{2016}).
\btitle{Let’s not mischaracterize replication studies: authors}.
\end{bmisc}
\endbibitem

\bibitem[\protect\citeauthoryear{Purdie-Vaughns
  et~al.}{2008}]{PurdieVaughns:2008en}
\begin{barticle}[author]
\bauthor{\bsnm{Purdie-Vaughns},~\bfnm{Valerie}\binits{V.}},
  \bauthor{\bsnm{Steele},~\bfnm{Claude~M}\binits{C.~M.}},
  \bauthor{\bsnm{Davies},~\bfnm{Paul~G}\binits{P.~G.}},
  \bauthor{\bsnm{Ditlmann},~\bfnm{Ruth}\binits{R.}} \AND
  \bauthor{\bsnm{Crosby},~\bfnm{Jennifer~Randall}\binits{J.~R.}}
(\byear{2008}).
\btitle{{Social identity contingencies: How diversity cues signal threat or
  safety for African Americans in mainstream institutions}}.
\bjournal{Journal of Personality and Social Psychology}
\bvolume{94}
\bpages{615--630}.
\end{barticle}
\endbibitem

\bibitem[\protect\citeauthoryear{Sampson and Sill}{2005}]{Sampson:2005}
\begin{barticle}[author]
\bauthor{\bsnm{Sampson},~\bfnm{Allan~R}\binits{A.~R.}} \AND
  \bauthor{\bsnm{Sill},~\bfnm{Michael~W}\binits{M.~W.}}
(\byear{2005}).
\btitle{Drop-the-Losers Design: Normal Case}.
\bjournal{Biometrical Journal}
\bvolume{47}
\bpages{257--268}.
\end{barticle}
\endbibitem

\bibitem[\protect\citeauthoryear{Simonsohn, Nelson and
  Simmons}{2014a}]{Simonsohn:2014ch}
\begin{barticle}[author]
\bauthor{\bsnm{Simonsohn},~\bfnm{Uri}\binits{U.}},
  \bauthor{\bsnm{Nelson},~\bfnm{Leif~D}\binits{L.~D.}} \AND
  \bauthor{\bsnm{Simmons},~\bfnm{Joseph~P}\binits{J.~P.}}
(\byear{2014}a).
\btitle{{\emph{p}-Curve and Effect Size: Correcting for Publication Bias Using
  Only Significant Results}}.
\bjournal{Perspectives on Psychological Science}
\bvolume{9}
\bpages{666--681}.
\end{barticle}
\endbibitem

\bibitem[\protect\citeauthoryear{Simonsohn, Nelson and
  Simmons}{2014b}]{Simonsohn:2014fa}
\begin{barticle}[author]
\bauthor{\bsnm{Simonsohn},~\bfnm{Uri}\binits{U.}},
  \bauthor{\bsnm{Nelson},~\bfnm{Leif~D}\binits{L.~D.}} \AND
  \bauthor{\bsnm{Simmons},~\bfnm{Joseph~P}\binits{J.~P.}}
(\byear{2014}b).
\btitle{{\emph{P}-Curve: a Key to the File-Drawer}}.
\bjournal{Journal of Experimental Psychology: General}
\bvolume{143}
\bpages{534--547}.
\end{barticle}
\endbibitem

\bibitem[\protect\citeauthoryear{Srivastava}{2015}]{Srivastava:2015}
\begin{bmisc}[author]
\bauthor{\bsnm{Srivastava},~\bfnm{Sanjay}\binits{S.}}
(\byear{2015}).
\btitle{Moderator interpretations of the Reproducibility Project}.
\end{bmisc}
\endbibitem

\bibitem[\protect\citeauthoryear{Srivastava}{2016}]{Srivastava:2016}
\begin{bmisc}[author]
\bauthor{\bsnm{Srivastava},~\bfnm{Sanjay}\binits{S.}}
(\byear{2016}).
\btitle{Evaluating a new critique of the Reproducibility Project}.
\end{bmisc}
\endbibitem

\bibitem[\protect\citeauthoryear{Storey}{2002}]{Storey:2002vj}
\begin{barticle}[author]
\bauthor{\bsnm{Storey},~\bfnm{John~D}\binits{J.~D.}}
(\byear{2002}).
\btitle{{A direct approach to false discovery rates}}.
\bjournal{Journal of the Royal Statistical Society Series B (Statistical
  Methodology)}
\bvolume{64}
\bpages{479--498}.
\end{barticle}
\endbibitem

\bibitem[\protect\citeauthoryear{Stroebe}{2016}]{Stroebe:2016dj}
\begin{barticle}[author]
\bauthor{\bsnm{Stroebe},~\bfnm{Wolfgang}\binits{W.}}
(\byear{2016}).
\btitle{{Are most published social psychological findings false?}}
\bjournal{Journal of Experimental Social Psychology}
\bvolume{66}
\bpages{134--144}.
\end{barticle}
\endbibitem

\bibitem[\protect\citeauthoryear{Valentine et~al.}{2011}]{Valentine:2011ga}
\begin{barticle}[author]
\bauthor{\bsnm{Valentine},~\bfnm{Jeffrey~C}\binits{J.~C.}},
  \bauthor{\bsnm{Biglan},~\bfnm{Anthony}\binits{A.}},
  \bauthor{\bsnm{Boruch},~\bfnm{Robert~F}\binits{R.~F.}},
  \bauthor{\bsnm{Castro},~\bfnm{Felipe~Gonz{\'a}lez}\binits{F.~G.}},
  \bauthor{\bsnm{Collins},~\bfnm{Linda~M}\binits{L.~M.}},
  \bauthor{\bsnm{Flay},~\bfnm{Brian~R}\binits{B.~R.}},
  \bauthor{\bsnm{Kellam},~\bfnm{Sheppard}\binits{S.}},
  \bauthor{\bsnm{Mo{\'{s}}cicki},~\bfnm{Eve~K}\binits{E.~K.}} \AND
  \bauthor{\bsnm{Schinke},~\bfnm{Steven~P}\binits{S.~P.}}
(\byear{2011}).
\btitle{{Replication in Prevention Science}}.
\bjournal{Prevention Science}
\bvolume{12}
\bpages{103--117}.
\end{barticle}
\endbibitem

\bibitem[\protect\citeauthoryear{van Aert and van Assen}{2017}]{vanAert:2017}
\begin{barticle}[author]
\bauthor{\bparticle{van} \bsnm{Aert},~\bfnm{Robbie C~M}\binits{R.~C.~M.}} \AND
  \bauthor{\bparticle{van} \bsnm{Assen},~\bfnm{Marcel A L~M}\binits{M.~A.
  L.~M.}}
(\byear{2017}).
\btitle{{Bayesian evaluation of effect size after replicating an original
  study}}.
\bjournal{PLoS ONE}
\bvolume{12}
\bpages{e0175302--23}.
\bdoi{10.1371/journal.pone.0175302}
\end{barticle}
\endbibitem

\bibitem[\protect\citeauthoryear{van Aert and van Assen}{2018}]{vanAert:2018}
\begin{barticle}[author]
\bauthor{\bparticle{van} \bsnm{Aert},~\bfnm{Robbie C~M}\binits{R.~C.~M.}} \AND
  \bauthor{\bparticle{van} \bsnm{Assen},~\bfnm{Marcel A L~M}\binits{M.~A.
  L.~M.}}
(\byear{2018}).
\btitle{{Examining reproducibility in psychology: A hybrid method for combining
  a statistically significant original study and a replication}}.
\bjournal{Behavior Research Methods}
\bvolume{50}
\bpages{1515--1539}.
\bdoi{10.3758/s13428-017-0967-6}
\end{barticle}
\endbibitem

\bibitem[\protect\citeauthoryear{van Dijk et~al.}{2008}]{vanDijk:2008br}
\begin{barticle}[author]
\bauthor{\bparticle{van} \bsnm{Dijk},~\bfnm{Eric}\binits{E.}},
  \bauthor{\bparticle{van} \bsnm{Kleef},~\bfnm{Gerben~A}\binits{G.~A.}},
  \bauthor{\bsnm{Steinel},~\bfnm{Wolfgang}\binits{W.}} \AND
  \bauthor{\bparticle{van} \bsnm{Beest},~\bfnm{Ilja}\binits{I.}}
(\byear{2008}).
\btitle{{A social functional approach to emotions in bargaining: When
  communicating anger pays and when it backfires}}.
\bjournal{Journal of Personality and Social Psychology}
\bvolume{94}
\bpages{600--614}.
\end{barticle}
\endbibitem

\bibitem[\protect\citeauthoryear{Weinstein, Fithian and
  Benjamini}{2013}]{Weinstein:2013}
\begin{barticle}[author]
\bauthor{\bsnm{Weinstein},~\bfnm{Asaf}\binits{A.}},
  \bauthor{\bsnm{Fithian},~\bfnm{William}\binits{W.}} \AND
  \bauthor{\bsnm{Benjamini},~\bfnm{Yoav}\binits{Y.}}
(\byear{2013}).
\btitle{Selection adjusted confidence intervals with more power to determine
  the sign}.
\bjournal{Journal of the American Statistical Association}
\bvolume{108}
\bpages{165--176}.
\end{barticle}
\endbibitem

\bibitem[\protect\citeauthoryear{Yekutieli}{2012}]{Yekutieli:2012}
\begin{barticle}[author]
\bauthor{\bsnm{Yekutieli},~\bfnm{Daniel}\binits{D.}}
(\byear{2012}).
\btitle{Adjusted Bayesian inference for selected parameters}.
\bjournal{Journal of the Royal Statistical Society: Series B (Statistical
  Methodology)}
\bvolume{74}
\bpages{515--541}.
\end{barticle}
\endbibitem

\bibitem[\protect\citeauthoryear{Z{\"o}llner and
  Pritchard}{2007}]{Zollner:2007}
\begin{barticle}[author]
\bauthor{\bsnm{Z{\"o}llner},~\bfnm{Sebastian}\binits{S.}} \AND
  \bauthor{\bsnm{Pritchard},~\bfnm{Jonathan~K}\binits{J.~K.}}
(\byear{2007}).
\btitle{Overcoming the winner’s curse: estimating penetrance parameters from
  case-control data}.
\bjournal{The American Journal of Human Genetics}
\bvolume{80}
\bpages{605--615}.
\end{barticle}
\endbibitem

\end{thebibliography}

\end{document}